\documentclass[11pt]{amsart}
\usepackage[a4paper,margin=1.0in]{geometry}
\usepackage{times}
\usepackage[numbers,sort&compress]{natbib}
\usepackage{amsmath,amssymb,amsfonts,amsthm,mathtools,bm}
\usepackage{booktabs,array,multirow,tabularx}
\usepackage{graphicx,subcaption}
\usepackage{algorithm,algpseudocode}
\usepackage{microtype,url,hyperref,cleveref}

\newcommand{\SPD}{\mathbb S_{++}}

\newcommand{\R}{\mathbb R}
\newcommand{\E}{\mathbb E}
\newcommand{\Prob}{\mathbb P}

\newcommand{\tr}{\operatorname{tr}}
\newcommand{\diag}{\operatorname{diag}}
\newcommand{\AIRM}{\mathrm{AI}}
\newcommand{\Rhat}{\widehat R}
\newcommand{\Revidence}{R_e}

\newcommand{\safegraphic}[2][\linewidth]{%
  \IfFileExists{#2}{\includegraphics[width=#1]{#2}}{%
  \fbox{\parbox[c][0.17\textheight][c]{0.90\linewidth}{\centering
  \textbf{Figure placeholder}\\[2mm]\texttt{\detokenize{#2}}}}}}

\theoremstyle{plain}
\newtheorem{lemma}{Lemma}
\newtheorem{proposition}{Proposition}
\newtheorem{corollary}{Corollary}
\newtheorem{theorem}{Theorem}
\theoremstyle{definition}
\newtheorem{remark}{Remark}

\title{Bayesian Matrix-Valued Graphs for Context-Dependent Multivariate Relationships}
\author{Papri Dey}
\date{September 2026}

\begin{document}

\begin{abstract}
Many scientific graphs attach several variables to each node, so a single scalar edge weight cannot describe direction-dependent interactions. We model each edge by a symmetric positive-definite (SPD) matrix and infer a posterior over matrix-valued graph geometries, which we call the Bayesian matrix-valued graph (BMVG). We ask how these interactions reconfigure across contexts: how large the change is and which multivariate directions strengthen or weaken. The geodesic distance induced by the affine-invariant Riemannian metric
(AIRM) quantifies deformation magnitude and generalized eigenvalues resolve its signed directions.Against fused graphical lasso, Bayesian multiple-GGM, and common principal
components, BMVG is competitive on global precision recovery while retaining
identifiable matrix-valued edge structure and accurately recovering edge-level
deformation directions. In controlled known-truth experiments, it resolves structural change with increasing sample size, including orientation changes that leave ordinary eigenvalues unchanged. In one year of Bay Area weather data, the geometry of 12-hour change reconfigures spatial coupling about as much as whole seasons differ. In TCGA-BRCA, estrogen-receptor (ER)-associated reconfiguration concentrates on specific gene-module pairs and persists under graph-scaffold sparsification and removal of subgroup mean differences. These results establish posterior matrix-valued edge geometry as a unified framework for quantifying and interpreting context-dependent multivariate reconfiguration.
\end{abstract}
\maketitle

\section{Introduction}
\label{sec:intro}

A scalar graph edge records only how strongly two nodes are connected. This
is inadequate when each node carries several variables. A weather station
reports temperature, relative humidity, wind, and pressure; a biological
module comprises several genes. The relationship between two such nodes can be
strong along one combination of variables and weak along another, and it can
rotate when the scientific context changes, a reconfiguration that no
single scalar weight can express.

We therefore represent each edge $e$ of a fixed graph $G=(V,E)$ by a symmetric
positive-definite (SPD) matrix $W_e\in\mathcal{S}^{d}_{++}$, i.e.\ $x^{\top}
W_e x>0$ for every nonzero $x\in\mathbb{R}^{d}$. Inference targets the
posterior of $W=\{W_e\}_{e\in E}$, a distribution over matrix-valued graph
geometries rather than over scalar edge strengths.

Our aim is not to fit a graph in each context, but to characterize how the
multivariate relationships it encodes change across contexts
: different weather regimes, atmospheric state versus short-horizon innovation, or distinct
tumor subgroups. We distinguish three complementary aspects of change:
(1)~\emph{magnitude}, how far the corresponding SPD edge matrices move;
(2)~\emph{uncertainty}, whether that separation exceeds the finite-sample
variation expected from independently fitted posteriors under no structural
change; and (3)~\emph{direction}, which multivariate combinations strengthen
or weaken between contexts. Together these separate genuine, interpretable
reconfiguration from ordinary estimation variability.

The ingredients are individually classical: affine-invariant geometry on SPD
matrices~\cite{bhatia2007,pennec2006}, and Gaussian graphical models with cross-context
comparison of covariance or precision structure~\cite{lauritzen1996,flury1984,danaher2014,peterson2015}.
We claim no new SPD metric or MCMC method. Our contribution is their
graph-specific synthesis into a posterior over matrix-valued edges, and 
the demonstration that this representation directly characterizes
edge-level directional reconfiguration and can recover it substantially more
accurately than the comparison methods considered here. ConeMALA~\cite{sampler} provides the geometry-aware MCMC inference engine used
to sample these posteriors. Here we build on the resulting Bayesian
matrix-valued graph (BMVG) posterior to study context-dependent deformation,
uncertainty, and generalized directions of change.

Our contributions are as follows. First, we use the AIRM-induced geodesic
distance as an intrinsic edge effect size and a generalized eigenproblem to
resolve that distance into signed strengthening and weakening directions.
Second, we introduce a same-truth reference that compares a planted change
against the separation between posteriors fitted to two independent datasets
from the same graph, so that finite-sample posterior spread is not mistaken for
structural change. Third, on a $120$-condition controlled grid we compare BMVG
against fused graphical lasso~\cite{danaher2014}, Bayesian multiple-GGM~\cite{peterson2015},
and common principal components~\cite{flury1984}, each in its native inference: BMVG is competitive on global precision recovery while retaining identifiable
matrix-valued edge structure and accurately recovering edge-level deformation
directions. Under the stated conditions, the matrix-valued edge parameterization is
identifiable: the edge blocks are uniquely determined by the induced precision
matrix. This permits context-dependent changes to be attributed to specific
edges, whose deformation magnitude and generalized directions can then be
estimated from the posterior.
Finally, we evaluate the framework in controlled, meteorological, and
biological settings, where it resolves orientation changes invisible to
ordinary eigenvalues, finds that $12$-hour weather innovation reconfigures
spatial geometry about as much as whole seasons differ, and localizes
estrogen-receptor-associated change in TCGA-BRCA to specific gene-module pairs
that persist under scaffold and mean-shift perturbation.

\section{Bayesian matrix-valued graph model}
\label{sec:model}
Let $G=(V,E)$ be a fixed undirected graph with $m=|V|$ nodes, and let each node
contain a $d$-vector.  Choose an arbitrary orientation for each edge and let
$b_e\in\R^m$ be its incidence vector.  Define
\[
B_e=b_e\otimes I_d\in\R^{md\times d},
\]
where $\otimes$ is the Kronecker product.  Reversing the orientation changes
$b_e$ to $-b_e$ but leaves the formulas below unchanged.

For edge matrices $W=(W_e)_{e\in E}$, define the block Laplacian and precision
matrix
\begin{equation}
L(W)=\sum_{e\in E}B_eW_eB_e^\top,
\qquad Q(W)=L(W)+R,
\qquad R\succ0.
\label{eq:block-laplacian}
\end{equation}
Here $R$ is a fixed positive-definite stabilizer.  If
$y=(y_1^\top,\ldots,y_m^\top)^\top\in\R^{md}$, then
\begin{equation}
y^\top L(W)y
=\sum_{e=(i,j)\in E}(y_i-y_j)^\top W_e(y_i-y_j).
\label{eq:laplacian-energy}
\end{equation}
Thus $W_e$ describes how disagreement between the two endpoint vectors is
penalized in different directions.  The inverse $Q(W)^{-1}$ is the covariance
matrix under the Gaussian working model, while $Q(W)$ is the precision matrix.

The notation above assumes that the endpoint vectors use the same coordinates.
When two nodes contain different variables, this direct subtraction is not
appropriate.  \Cref{app:fixedmap} gives a fixed-map extension that
replaces $y_i-y_j$ by
$A_{e,i}y_i-A_{e,j}y_j$ in an edge-specific shared space.  The likelihood,
edge score, AIRM comparison, and ConeMALA update keep the same form after this
replacement.  The TCGA-BRCA analysis uses the one-dimensional specialization of this
extension.

Observations are centered according to the experiment-specific preprocessing
described in \Cref{sec:experiments}, and in the corresponding appendices.  Therefore, the Gaussian likelihood is written with zero mean.
For centered observations $Y_1,\ldots,Y_n\in\R^{md}$, let
\[
S=\frac1n\sum_{r=1}^nY_rY_r^\top.
\]
Conditional on the
edge matrices $W=\{W_e:e\in E\}$,
we use the Gaussian working model
\begin{equation}
Y_r\mid W
\sim
N\!\left(0,Q(W)^{-1}\right),
\qquad
\ell(W)
=
\frac n2
\left\{
\log\det Q(W)
-
\tr\!\bigl(SQ(W)\bigr)
\right\}
+C.
\label{eq:likelihood}
\end{equation}
Here $Q(W)$ is the graph precision matrix defined in
\Cref{eq:block-laplacian}; the precision is the inverse covariance, so
$Q(W)^{-1}$ describes the covariance implied by the matrix-valued graph.
The zero mean is appropriate because the observations have been centered.
The matrix $S
=
\frac1n
\sum_{r=1}^n
Y_rY_r^\top$
is the empirical second-moment matrix, $\det(\cdot)$ denotes the determinant,
$\tr(\cdot)$ denotes the matrix trace, and $C$ collects terms that do not
depend on $W$.

We use this likelihood to connect the observed multivariate variation to the
unknown edge matrices: values of $W$ are favored when the corresponding
precision $Q(W)$ explains the empirical dependence summarized by $S$.
The Gaussian distribution is therefore a \emph{working model} for
second-order dependence.  We do not require the scientific observations to be
exactly Gaussian, and the inferred edges are associational, not causal.

The edgewise likelihood gradient is
\begin{equation}
\nabla_{W_e}\ell(W)
=\frac n2 B_e^\top\{Q(W)^{-1}-S\}B_e.
\label{eq:edge-score}
\end{equation}
Although the SPD constraints are edgewise, all edge gradients depend on the
same global inverse $Q(W)^{-1}$.  ConeMALA writes its proposal relative to the
product affine-invariant Riemannian volume.  After converting the Wishart
posterior from product Lebesgue measure, the corresponding intrinsic negative
log density is, up to a constant,
\begin{equation}
\Phi(W)=
\frac n2\{\tr(SQ(W))-\log\det Q(W)\}
-\frac{\nu}{2}\sum_{e\in E}\log\det W_e
+\frac12\sum_{e\in E}\tr(\Psi^{-1}W_e).
\label{eq:intrinsic-potential}
\end{equation}
Independent Wishart priors
$W_e\sim\mathcal W_d(\nu,\Psi)$ keep every edge SPD and they are shared across the contexts being compared. Posterior draws are obtained with ConeMALA.
We run multiple chains and monitor the rank-normalized split-$\Rhat$ diagnostic
(chain agreement) and effective sample size, ESS (the number of approximately
independent draws represented by the correlated chain).  Exact settings and
all diagnostic gates are in \Cref{app:inference}.

\section{Posterior geometry and structural reconfiguration}
\label{sec:geometry}
Context- or group-specific posterior inference is standard in Bayesian
multigroup graphical modeling; see, for example, \cite{peterson2015}.
For context $c$, let $\Pi_c(W)=p(W\mid\mathcal D_c)$ 
denote the posterior based on observations $\mathcal D_c$.  A context may be a weather regime, season,
biological group, controlled perturbation, or a different representation of
the same data.

For $A,B\in\mathbb S_{++}^d$, the geodesic distance induced by the
AIRM is
\begin{equation}
d_{\AIRM}(A,B)=\left\|\log(A^{-1/2}BA^{-1/2})\right\|_F,
\label{eq:airm}
\end{equation}
where $\|\cdot\|_F$ is the Frobenius norm.  It is invariant to a common
invertible change of coordinates.  For edge $e$ and contexts $a,b$ we use
\begin{equation}
D_e^{a,b}=d_{\AIRM}(W_e^a,W_e^b)
\label{eq:posterior-distance}
\end{equation}
as the posterior edge effect size.  We report distances between posterior-mean
blocks separately from draw-pair distances because the latter also contain
posterior uncertainty from both fits.

AIRM says how much an edge changed but not how.  For $A,B\in\SPD^d$, solve
\begin{equation}
Bv_k=\lambda_kAv_k,
\qquad \eta_k=\log\lambda_k.
\label{eq:generalized-eigenproblem}
\end{equation}
The $\lambda_k>0$ are the generalized eigenvalues of the pair $(B,A)$.

\begin{proposition}
\label{prop:distance-spectrum}
For the generalized eigenvalues in \eqref{eq:generalized-eigenproblem},
\begin{equation}
d_{\AIRM}(A,B)^2=\sum_{k=1}^d(\log\lambda_k)^2
=\sum_{k=1}^d\eta_k^2.
\label{eq:distance-spectrum}
\end{equation}
\end{proposition}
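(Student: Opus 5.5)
The plan is to reduce the generalized eigenproblem to an ordinary symmetric eigenproblem for the congruence-transformed matrix $M=A^{-1/2}BA^{-1/2}$, and then read off the Frobenius norm of its logarithm from its spectrum. Here $A^{-1/2}$ denotes the unique SPD inverse square root of $A\in\SPD^d$. First I will note that $M$ is symmetric, and that it is positive definite because $x^\top Mx=(A^{-1/2}x)^\top B(A^{-1/2}x)>0$ for $x\neq0$. Hence $M$ has an orthogonal eigendecomposition $M=U\diag(\mu_1,\dots,\mu_d)U^\top$ with every $\mu_k>0$, and the principal matrix logarithm is $\log M=U\diag(\log\mu_1,\dots,\log\mu_d)U^\top$.

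Next I will identify the $\mu_k$ with the generalized eigenvalues $\lambda_k$ of \eqref{eq:generalized-eigenproblem}. If $Bv=\lambda Av$ with $v\neq0$, set $w=A^{1/2}v\neq0$. Multiplying on the left by $A^{-1/2}$ gives $A^{-1/2}BA^{-1/2}w=\lambda w$, so $Mw=\lambda w$. The map $v\mapsto A^{1/2}v$ is invertible, so the argument also runs backwards. This gives a bijection between eigenpairs that preserves the eigenvalues and their geometric (hence algebraic, since $M$ is diagonalizable) multiplicities. Equivalently, the $\lambda_k$ are the roots of $\det(B-\lambda A)=\det(A)\det(M-\lambda I)$, which shows directly that they coincide with the spectrum of $M$, multiplicities included. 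In particular all $\lambda_k>0$, so the $\eta_k=\log\lambda_k$ are well defined.

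Finally, orthogonal invariance of the Frobenius norm gives
\[
\|\log M\|_F^2=\|\diag(\log\lambda_1,\dots,\log\lambda_d)\|_F^2=\sum_{k=1}^d(\log\lambda_k)^2=\sum_{k=1}^d\eta_k^2 .
\]
Combining this with \eqref{eq:airm} yields \eqref{eq:distance-spectrum}.

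The only point needing care is the multiplicity bookkeeping: the sum in \eqref{eq:distance-spectrum} must count each $\lambda_k$ exactly as often as it occurs as an eigenvalue of $M$. The characteristic-polynomial identity $\det(B-\lambda A)=\det(A)\det(M-\lambda I)$ settles this cleanly, so I would use it rather than the eigenvector bijection alone. Everything else is routine spectral calculus for symmetric matrices.
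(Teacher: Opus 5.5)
Your proposal is correct and follows essentially the same route as the paper: congruence to $A^{-1/2}BA^{-1/2}$, identifying the generalized eigenvalues with its ordinary spectrum, then orthogonal invariance of the Frobenius norm. Your characteristic-polynomial identity $\det(B-\lambda A)=\det(A)\det(M-\lambda I)$ makes explicit the multiplicity bookkeeping that the paper simply asserts.
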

This eigenvalue form of the affine-invariant distance is classical~\cite{bhatia2007,FM03}. A positive $\eta_k$ means that $B$ is stronger relative to $A$ along the corresponding generalized direction; a negative value means weaker.  Different
$\eta_k$ may have different signs, so one matrix edge can strengthen and weaken
simultaneously in different directions.

For an edge $e$, let
$
D_{e,c}:=
d_{\AIRM}\!\left(W_e^{(0)},W_e^{(c)}\right)
$ denote the AIRM distance between an independently paired draw from the
baseline posterior and a draw from the posterior under candidate context
$c$.  This is a draw-level measure of the size of the inferred edge change.
Likewise, let $D_{e,\mathrm{null}}
=
d_{\AIRM}\!\left(W_e^{(0)},W_e^{(0')}\right),$ 
where the two posteriors are fitted to independent datasets generated from
the same underlying edge matrix.

Even when there is no true structural change, two continuous posteriors fitted
to independent finite data sets produce positive draw-pair distances.
Therefore $\Prob(D_{e,c}>0)$ is not a useful equality test.  In the known-truth structural-resolution study, we instead define
\begin{equation}
\Revidence(c)
=
\Prob\!\left(
D_{e,c}>D_{e,\mathrm{null}}
\mid\mathcal D
\right).
\label{eq:null-evidence}
\end{equation}
Thus $D_{e,c}$ measures the magnitude of the candidate change, whereas
$\Revidence(c)$ measures how often that change is larger than the
finite-sample separation observed under no change.  The prespecified rule
$\Revidence(c)\ge0.95$ declares a change detectable when its posterior
draw-pair distance exceeds the corresponding no-change distance with at least
$95\%$ probability.  This threshold is used only to calibrate the controlled
known-truth experiment and is not proposed as a universal scientific
significance rule.

\section{Experiments}
\label{sec:experiments}
The studies have distinct roles. The known-truth structural-resolution study
measures finite-sample resolution and calibrates the separation expected under
no change. The Meteostat analysis examines how spatial matrix-valued graph
geometry varies across physical and observational contexts. The TCGA-BRCA
population analysis studies ER-associated module-pair reconfiguration. All
main posterior fits use multiple chains and prespecified $\Rhat$/ESS diagnostic
gates; sampler settings and diagnostic criteria are reported in
\Cref{app:inference}.
\subsection{Known-truth structural resolution}
\label{sec:expA}

We use five modules of dimension $d=3$ connected by six fixed edges.  For each
of five independent generating seeds, a fresh baseline truth is sampled and
then held fixed while controlled perturbations are applied.  We use
$n\in\{200,500,1000\}$ and plant changes with target AIRM distances
$D^\star\in\{0.5,0.75,1,1.5\}$.  A sparse perturbation changes one
generalized direction.  An orientation perturbation uses an orthogonal
congruence $RAR^\top$ and is tuned to the same AIRM magnitude while preserving
trace, determinant, and ordinary eigenvalues.  The latter is invisible to any
summary based only on ordinary eigenvalues.

All 165 required posterior fits pass the diagnostic gate.  Under exactly
unchanged truth, the mean draw-pair distance decreases from $1.052$ at $n=200$
to $0.636$ at $n=500$ and $0.449$ at $n=1000$, closely following an empirical
$n^{-1/2}$ scale. With $R_e\ge0.95$ taken as a positive declaration of structural change,
specificity is defined as
\[
\operatorname{Specificity}
=
\frac{\mathrm{TN}}{\mathrm{TN}+\mathrm{FP}},
\]
where $\mathrm{TN}$ is the number of truly unchanged edges with
$R_e<0.95$ and $\mathrm{FP}$ is the number of truly unchanged edges
incorrectly declared changed, i.e., with $R_e\ge0.95$.
The resulting specificity is $1.00$ in every reported sparse and
orientation condition. At planted effect $D^\star=1$, the orientation change is detected in $0/5$, $4/5$, and $5/5$ independent data-generating replicates at $n=200$, $500$, and $1000$, respectively.  The corresponding sparse-change detection rates are $0/5$, $1/5$, and $3/5$. Thus equal AIRM magnitude need not imply equal statistical difficulty.

\begin{figure}[htbp]
\centering
\begin{subfigure}[t]{0.32\linewidth}\centering
\safegraphic{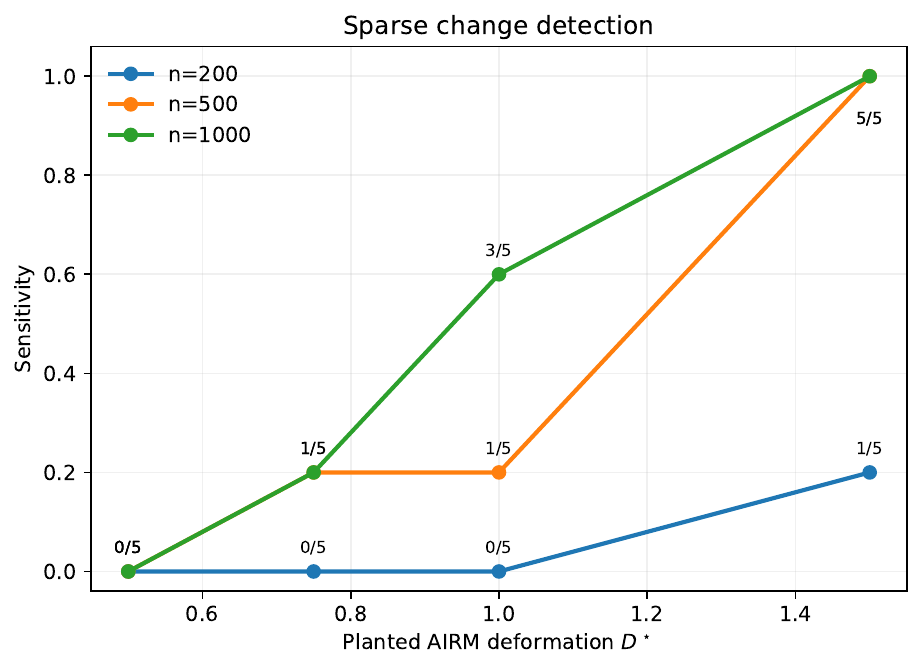}
\caption{Sparse change.}
\end{subfigure}\hfill
\begin{subfigure}[t]{0.32\linewidth}\centering
\safegraphic{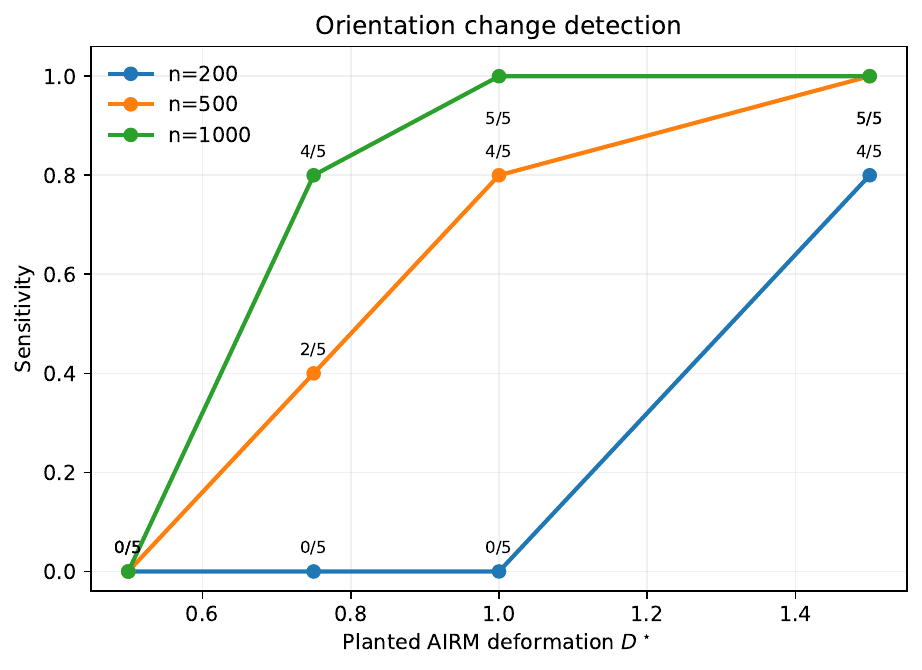}
\caption{Orientation change.}
\end{subfigure}\hfill
\begin{subfigure}[t]{0.32\linewidth}\centering
\safegraphic{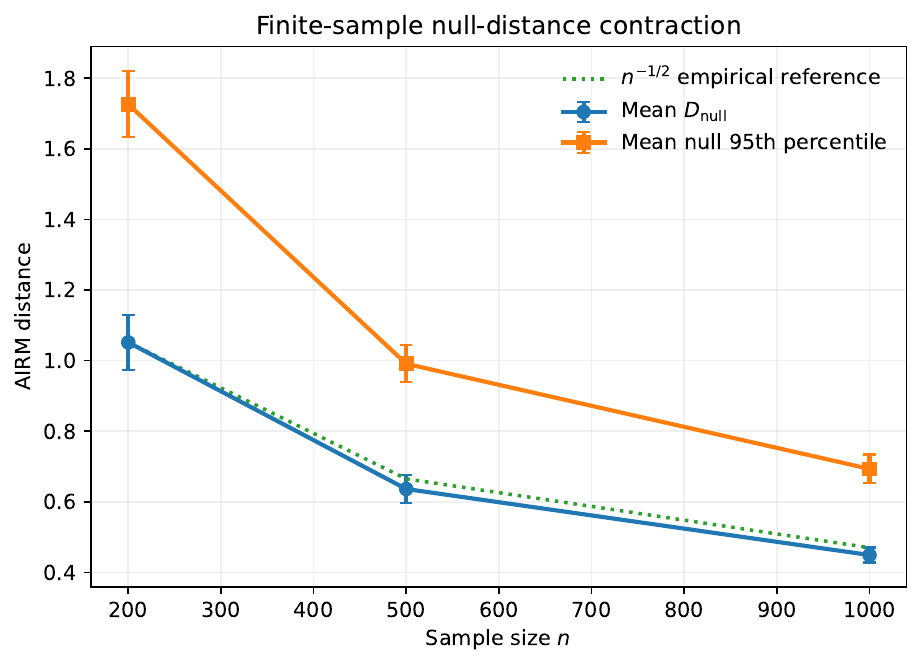}
\caption{Null contraction.}
\end{subfigure}
\caption{
Panels (A-B) show the sensitivity of the prespecified
$R_e\ge0.95$ rule across five independently generated synthetic
data replicates.  Panel (C) shows the positive distance floor produced by two
independently fitted no-change posteriors and its contraction with sample size.}
\label{fig:expA_detection}
\end{figure}

The generalized modes recover how the edge changed.  At $D^\star=1$, mean
direction alignment improves from $0.853$ to $0.956$ for sparse deformation
and from $0.928$ to $0.978$ for orientation deformation as $n$ grows from 200
to 1000.  Sign-recovery probability is essentially one at the largest sample
size.  Full detection tables, magnitude recovery, and direction/sign plots are
in \Cref{app:expA}.

\subsection{Comparison with native multigroup estimators}
\label{sec:expA_native_comparison}

To assess whether the gains above arise from the matrix-valued graph
representation rather than only from the sampling procedure, we compare the
BMVG method with three established estimators, each fitted using its native
inference mechanism. DWW14 fused graphical lasso (FGL)
\citep{danaher2014} estimates two unrestricted $15\times15$ precision
matrices, with sparsity and fusion penalties selected by validation Gaussian
negative log-likelihood (NLL), without access to the planted truth. PSV15
\citep{peterson2015} is a Bayesian multiple-GGM model fitted with its native
G-Wishart MCMC; we evaluate both its posterior-mean precision matrices and its
differential edge-support probabilities. Flury84 common principal components
(CPC) \citep{flury1984} is fitted by profiled Gaussian maximum likelihood to
the three-dimensional contrast on the planted changed edge, providing a direct
benchmark for covariance and directional recovery.

All methods use the same training observations, while fresh common
validation/test samples are used only for tuning and evaluation. The full
comparison covers five independent replicates,
$n\in\{200,500,1000\}$, $D^\star\in\{0.5,0.75,1,1.5\}$, and both perturbation
families, giving $5\times3\times4\times2=120$
matched conditions. Thus the study compares complete statistical procedures,
rather than attributing differences solely to parameterization.

For the candidate context, $Q_1^\star$ denotes the data-generating precision
matrix and $\widehat Q_1$ its estimate; the corresponding precision change is
\[\Delta Q^\star=Q_1^\star-Q_0^\star,
\
\widehat{\Delta Q}=\widehat Q_1-\widehat Q_0.\]
For the changed edge $e=(i,j)$, let $\delta_e(Y)=B_e^\top Y=Y_i-Y_j$ 
denote its $d$-dimensional contrast, with context-specific covariance
\[
\Sigma_{e,c}
=
\operatorname{Cov}\{\delta_e(Y)\}
=
B_e^\top Q_c^{-1}B_e,
\qquad
\Delta\Sigma_e=\Sigma_{e,1}-\Sigma_{e,0}.
\]
Here $\Sigma_{e,c}^\star$ and $\Delta\Sigma_e^\star$ denote the corresponding
data-generating quantities. ``Axis error'' is the absolute principal angle,
in degrees, between the estimated and planted leading deformation directions.
NLL denotes Gaussian negative log-likelihood, so lower values indicate better
predictive fit.

\begin{table*}[htbp]
\centering
\scriptsize
\caption{
Panels A-B report mean \(\pm\) sample standard deviation across five
independent replicates at \(D^\star=1\) and \(n=1000\); lower is better.
Panel~C reports pairwise win counts over the full \(120\)-condition grid:
an entry \(a/120\) means that BMVG has strictly smaller error (or NLL) than
the indicated comparator in \(a\) of the \(120\) matched conditions.}
\label{tab:native_method_comparison}
\setlength{\tabcolsep}{4.2pt}
\begin{tabular}{llccc}
\toprule
\multicolumn{5}{l}{\textbf{Panel A: global precision recovery and predictive fit}}\\
Family & Method &
\(\|\widehat Q_1-Q_1^\star\|_F/\|Q_1^\star\|_F\) &
\(\|\widehat{\Delta Q}-\Delta Q^\star\|_F/\|\Delta Q^\star\|_F\) &
Test NLL\\
\midrule
Sparse & BMVG & \textbf{0.066 \(\pm\) 0.014} &
\textbf{0.303 \(\pm\) 0.068} & \textbf{15.752 \(\pm\) 0.299} \\
 & DWW14 FGL & 0.168 \(\pm\) 0.037 & 0.743 \(\pm\) 0.066 &
15.782 \(\pm\) 0.297 \\
 & PSV15 multiGGM & 0.086 \(\pm\) 0.013 & 0.416 \(\pm\) 0.076 &
15.787 \(\pm\) 0.295 \\
\addlinespace[2pt]
Orientation & BMVG & \textbf{0.075 \(\pm\) 0.010} &
0.829 \(\pm\) 0.212 & \textbf{15.859 \(\pm\) 0.278} \\
 & DWW14 FGL & 0.085 \(\pm\) 0.005 &
\textbf{0.686 \(\pm\) 0.099} & 15.883 \(\pm\) 0.280 \\
 & PSV15 multiGGM & 0.101 \(\pm\) 0.009 & 1.074 \(\pm\) 0.221 &
15.892 \(\pm\) 0.274 \\
\midrule
\multicolumn{5}{l}{\textbf{Panel B: changed-edge multivariate geometry}}\\
Family & Method &
\(\|\widehat{\Delta\Sigma}_e-\Delta\Sigma_e^\star\|_F/
 \|\Delta\Sigma_e^\star\|_F\) &
Axis error (deg.) &
Edge NLL\\
\midrule
Sparse & BMVG & \textbf{0.210 \(\pm\) 0.065} &
\textbf{0.70 \(\pm\) 0.58} & \textbf{3.208 \(\pm\) 0.167} \\
 & Flury84 CPC & 0.642 \(\pm\) 0.141 &
20.07 \(\pm\) 19.14 & 3.214 \(\pm\) 0.170 \\
\addlinespace[2pt]
Orientation & BMVG & \textbf{0.358 \(\pm\) 0.088} &
\textbf{4.05 \(\pm\) 4.07} & \textbf{3.643 \(\pm\) 0.157} \\
 & Flury84 CPC & 1.033 \(\pm\) 0.019 &
25.29 \(\pm\) 5.90 & 3.662 \(\pm\) 0.157 \\
\midrule
\multicolumn{5}{l}{\textbf{Panel C: matched conditions in which BMVG performs better}}\\
Comparator & & Context/candidate error & Change error &
NLL \(\quad/\quad\) Direction\\
\midrule
DWW14 FGL & & 109/120 & 62/120 & 120/120 \(\quad/\quad\) -- \\
PSV15 multiGGM & & 120/120 & 120/120 & 120/120 \(\quad/\quad\) -- \\
Flury84 CPC & & 98/120 & 111/120 & 112/120 \(\quad/\quad\) 115/120 \\
\bottomrule
\end{tabular}

\vspace{2pt}
\parbox{0.98\linewidth}{\footnotesize
For DWW14 and PSV15, ``context/candidate error'' is the relative Frobenius
error of \(Q_1\), and ``change error'' is the relative Frobenius error of
\(\Delta Q\). For the Flury84 CPC row in Panel~C, the corresponding quantities
are changed-edge covariance error and covariance-change error; NLL is computed
on the edge contrast and ``Direction'' denotes leading-axis error. The
orientation-family \(\Delta Q\) result is intentionally not uniformly
favorable to BMVG: FGL can estimate the aggregate precision change
\(\Delta Q\) more accurately, whereas BMVG is substantially more accurate
for recovering the edge-level deformation direction.}
\end{table*}

\begin{figure*}[htbp]
\centering
\begin{minipage}[t]{0.325\linewidth}
\centering
\includegraphics[width=\linewidth]{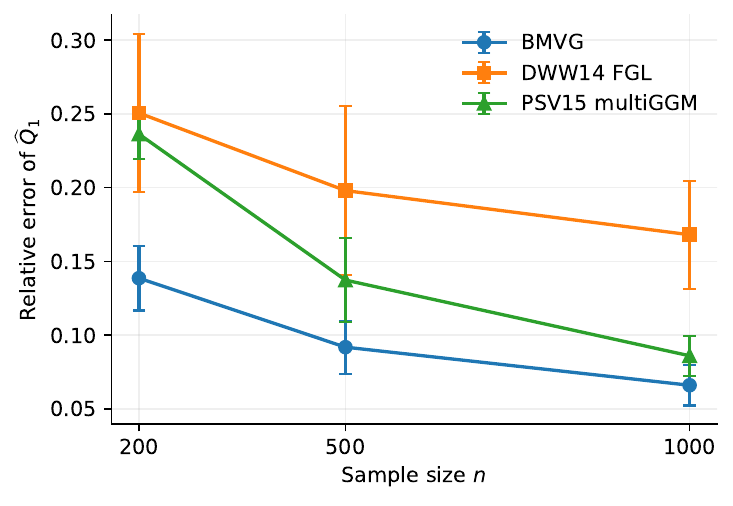}\\[-2pt]
\textbf{(a)} Sparse perturbation
\end{minipage}\hfill
\begin{minipage}[t]{0.325\linewidth}
\centering
\includegraphics[width=\linewidth]{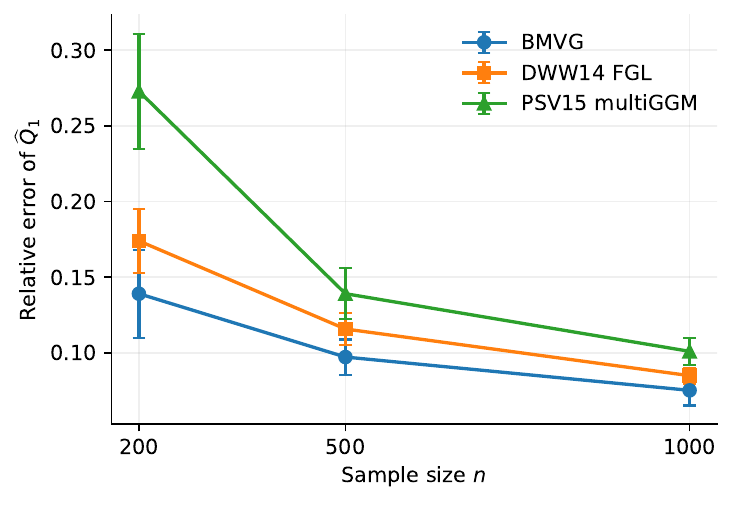}\\[-2pt]
\textbf{(b)} Orientation perturbation
\end{minipage}\hfill
\begin{minipage}[t]{0.325\linewidth}
\centering
\includegraphics[width=\linewidth]{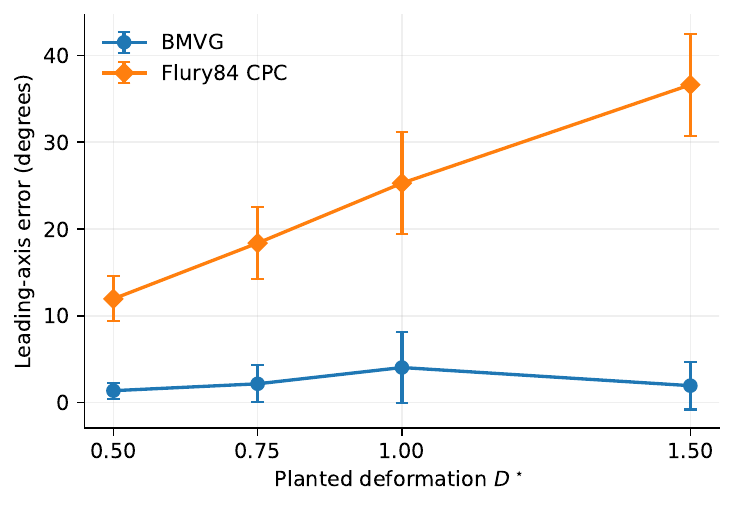}\\[-2pt]
\textbf{(c)} Directional recovery
\end{minipage}
\caption{
\footnotesize
Panels (a)-(b) show the relative error of the candidate-context precision \(Q_1\) at
\(D^\star=1\), averaged over five replicates. The BMVG estimator improves
steadily with \(n\) and is most accurate for both perturbation families;
PSV15 is particularly competitive for sparse changes, while FGL is closer
under orientation changes. (c) Mean absolute leading-axis error for the
orientation family at \(n=1000\). The BMVG matrix-valued edge tracks the
planted rotation across deformation magnitudes, whereas the common-basis
restriction of CPC yields substantially larger directional error. Error bars
show one sample standard deviation across replicates.}
\label{fig:native_method_main}
\end{figure*}

At \(D^\star=1\) and \(n=1000\), BMVG has the smallest
candidate-precision error for both perturbation families
(\Cref{tab:native_method_comparison}, Panel~A), while FGL is more accurate
for the global \(\Delta Q\) error in the orientation case. Thus BMVG is not
uniformly superior for every global difference summary; its principal gain is
in recovering the local matrix-valued deformation. 
Together, Panels A-C show that BMVG retains competitive global precision and
predictive recovery while providing substantially more accurate edge-level
multivariate deformation geometry. The corresponding precision-recovery
trends are shown in \Cref{fig:native_method_main}. Full \(\Delta Q\),
localization, PSV15 support-probability, and MCMC-stability results are
reported in \Cref{app:native_comparison_details}.

\subsection{Context-dependent spatial geometry in Meteostat}
\label{sec:meteostat}

We use one year (2023) from eight Bay Area weather stations
\citep{meteostat}. Each station has $d=4$ variables: temperature, relative
humidity, wind speed, and pressure. Hourly observations are aggregated to
3-hour means, giving 2920 multivariate spatial fields. To construct the
geographic scaffold, each station is connected to its three geographically
nearest stations, and the resulting neighbor relations are symmetrized.
After duplicate reciprocal connections are merged, this gives a fixed graph
with 16 undirected edges. All preprocessing parameters are estimated from the
training period only.

The experiment asks whether the spatial relationships describing the
atmospheric state are the same as those describing how that state changes over
the following 12 hours. To separate these roles, we use three global
representations. 
Let $X_t\in\R^{8\times 4}$ denote the physical 3-hour mean meteorological
field.  After a variable-wise support transform $g$, the \emph{raw}
representation is $Y_{t,iv}
=
\frac{
g_v(X_{t,iv})-\mu_{iv}^{\rm tr}
}{
\sigma_{iv}^{\rm tr}
},$ 
where $\mu_{iv}^{\rm tr}$ and $\sigma_{iv}^{\rm tr}$ are the empirical
training-period mean and sample standard deviation, respectively, of the
transformed variable $v$ at station $i$. The transform $g$
leaves temperature and pressure unchanged, applies a logit transform to
relative humidity, and a scaled inverse-softplus transform to wind speed.

The \emph{anomaly} representation is $A_t=Y_t-\bar Y_{h(t)}^{\rm tr},$
where $\bar Y_{h(t)}^{\rm tr}$ is the mean standardized field, computed over
training days at the same 3-hour time-of-day position as observation $t$, thereby
removing the regular diurnal component.  Finally, the 12-hour
\emph{innovation} $I_t=Y_{t+12{\rm h}}-Y_t$
describes the change in the standardized meteorological field over the
following 12 hours.
Thus, raw and anomaly representations describe spatial dependence in the
atmospheric state, whereas the innovation representation describes spatial
dependence in its short-horizon evolution. We additionally fit four
training-derived anomaly regimes and use four seasons as retrospective
descriptive contexts. Exact transformation constants and
training-only preprocessing details are given in \Cref{app:meteostat}.

\Cref{fig:meteostat_context_geometry} addresses the first question: how much does
the inferred spatial coupling change when the environmental context or
representation changes? Using \eqref{eq:posterior-distance},
we summarize the overall graph separation by averaging over the fixed edge set,
\begin{equation}
\bar D_{a,b}
=
\frac{1}{|E|}
\sum_{e\in E}
D_e^{a,b}
=
\frac{1}{|E|}
\sum_{e\in E}
d_{\rm AI}\!\left(
\bar W_e^{(a)},\bar W_e^{(b)}
\right).
\label{eq:mean-edge-airm}
\end{equation}
Using \eqref{eq:mean-edge-airm}, the mean edgewise AIRM separations are
\[
\bar D_{\rm raw,anomaly}=0.887,\qquad
\bar D_{\rm raw,innovation}=3.184,\qquad
\bar D_{\rm anomaly,innovation}=3.576.
\]
As reference comparisons, we also evaluate anomaly-based posterior graphs
across atmospheric regimes and seasons.  Averaging the edgewise AIRM
distances as in \eqref{eq:mean-edge-airm} gives mean separations of $2.800$
between regimes and $3.511$ between seasons.  The
anomaly-innovation separation, $3.576$, is therefore comparable to the
substantial graph reconfiguration observed across seasons rather than to a
small perturbation of the anomaly geometry.

This contrast is especially clear when compared with
$\bar D_{\rm raw,anomaly}=0.887$.  Removing the mean diurnal cycle changes
the learned graph relatively little, whereas replacing the atmospheric state
by its 12-hour change produces a much larger reconfiguration.  Thus, stations
that are strongly related in their contemporaneous meteorological state need
not exhibit the same multivariate relationship in their short-horizon
evolution.  For environmental monitoring, this suggests that a graph
describing spatial state similarity need not also describe how atmospheric
changes evolve across the station network. Seasonal dependence also varies substantially.  The largest mean seasonal
separation is autumn-spring ($4.168$), while spring-summer is the smallest
($2.347$).  \Cref{fig:meteostat_context_geometry}(b) therefore reflects
changes in the fitted multivariate dependence structure across stations, not
merely changes in marginal weather values.

\begin{figure}[htbp]
\centering
\begin{subfigure}[t]{0.5\linewidth}\centering
\safegraphic{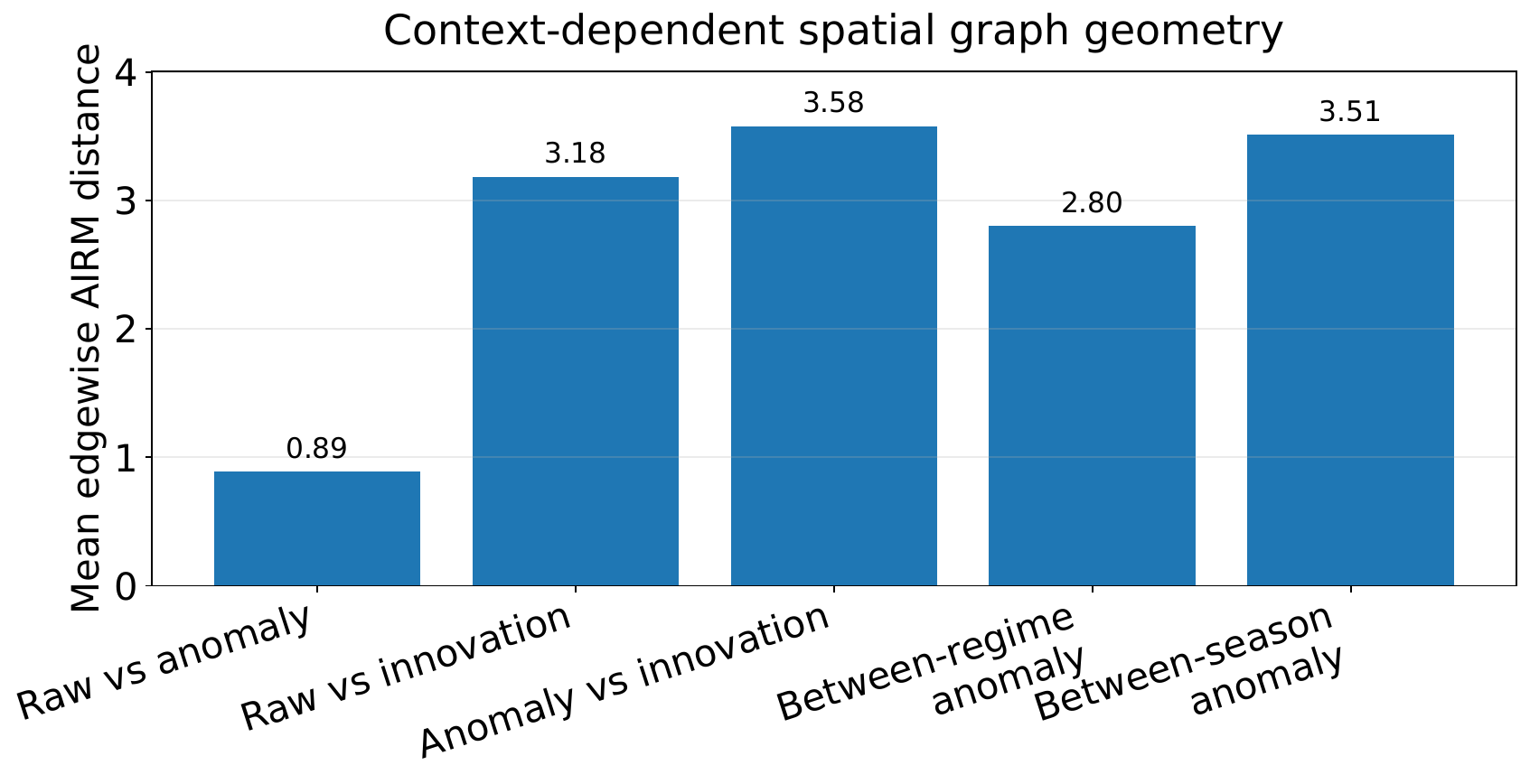}
\caption{Global and context separation.}
\end{subfigure}\hfill
\begin{subfigure}[t]{0.48\linewidth}\centering
\safegraphic{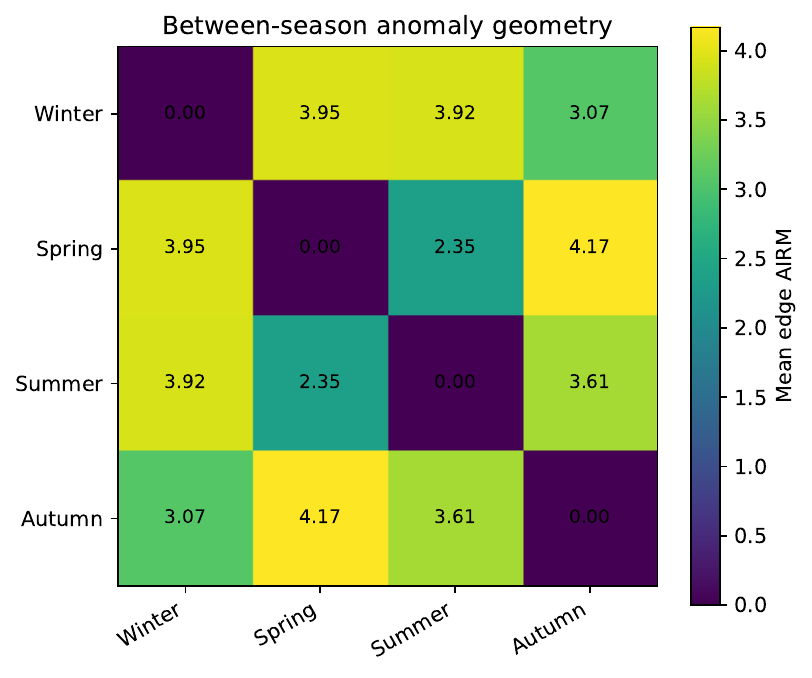}
\caption{Seasonal anomaly geometry.}
\end{subfigure}
\caption{
(A) Mean edgewise AIRM separation across the three global representations and
across training-derived regimes and seasons.
(B) Pairwise seasonal anomaly separation, largest between autumn and spring.
Distances are computed between corresponding posterior-mean edge matrices and
averaged over the 16 fixed geographic edges.
}
\label{fig:meteostat_context_geometry}
\end{figure}


A large AIRM distance by itself does not explain what has changed. \Cref{fig:meteostat_anisotropic_reconfiguration} therefore decomposes the global
anomaly-to-innovation difference into interpretable multivariate directions. For each edge $e$ and geometry
$g\in\{\mathrm{raw},\mathrm{anomaly},\mathrm{innovation}\}$, let
\[
\bar W_e^{(g)}
=
U_e^{(g)}
\operatorname{diag}
\!\left(
\lambda_{e1}^{(g)},\ldots,\lambda_{ed}^{(g)}
\right)
U_e^{(g)\top},
\qquad
\lambda_{e1}^{(g)}\ge\cdots\ge\lambda_{ed}^{(g)}.
\]
Panel~\ref{fig:meteostat_anisotropic_reconfiguration}(a) plots the leading-eigenvalue share
\[
s_e^{(g)}
=
\frac{\lambda_{e1}^{(g)}}
{\sum_{k=1}^{d}\lambda_{ek}^{(g)}}
=
\frac{\lambda_{e1}^{(g)}}
{\operatorname{tr}(\bar W_e^{(g)})},
\]
which measures how much of the total edge spectrum is concentrated in the
dominant ordinary eigenmode.  The variable associated with that mode is
identified separately from the corresponding eigenvector
$u_{e1}^{(g)}$: the mode is called pressure-dominated when the pressure
coordinate has the largest magnitude among the entries of
$u_{e1}^{(g)}$.  Thus the curve height measures dominance of the first
eigenmode, while the eigenvector coordinates determine which meteorological
variable dominates that mode.
 The leading ordinary eigenmode remains pressure-dominated on essentially every
edge: its mean spectral share is $0.846$, $0.848$, and $0.892$ for the raw,
anomaly, and innovation geometries, respectively. Hence the large
anomaly-to-innovation AIRM distance should not be interpreted as a complete
loss of common spatial structure. Instead, the data support a more specific
picture: a dominant pressure-related component remains stable, while other
multivariate directions reconfigure substantially.

Panel~\ref{fig:meteostat_anisotropic_reconfiguration}(b) resolves the
anomaly-to-innovation change edge by edge and direction by direction.  For
each edge $e$, the generalized eigenproblem
\[
W_e^{\rm innov}v_k
=
\lambda_k W_e^{\rm anomaly}v_k,
\qquad
\eta_k=\log\lambda_k,
\]
compares the two edge matrices along a common deformation direction $v_k$.
Thus $\eta_k>0$ indicates strengthening from anomaly to innovation along that
direction, whereas $\eta_k<0$ indicates weakening.  The dominant
meteorological variable is identified from the largest-magnitude coordinate
of the corresponding generalized eigenvector.  \Cref{fig:meteostat_anisotropic_reconfiguration}(b)
shows the strongest sign-certain modes, with horizontal intervals giving the
posterior 5-95\% quantiles.

For example, the Half Moon Bay-San Carlos edge has a strongly negative
wind-dominated mode,
\[
\eta=-5.557,
\qquad
90\%~\mathrm{posterior~interval}=[-7.330,-4.301],
\]
while the Palo Alto-San Carlos edge has a strongly negative
relative-humidity-dominated mode,
\[
\eta=-4.821,
\qquad
90\%~\mathrm{posterior~interval}=[-6.600,-3.460].
\]
These results show that the strongest anomaly-to-innovation changes are
localized to particular station pairs and meteorological directions rather
than arising from a uniform rescaling of all edge matrices.

\Cref{fig:meteostat_context_geometry,fig:meteostat_anisotropic_reconfiguration} therefore answer complementary
questions: the first measures how much the spatial graph changes across
representations and contexts, while the second identifies where those changes
occur and which variables dominate them.  This directional information is
specific to the matrix-valued representation and cannot be retained by a
single scalar edge weight.  Conditional-variance, effective-resistance, and
partial-correlation summaries give the same qualitative picture; full results
are reported in \Cref{app:meteostat}.

\begin{figure}[htbp]
\centering
\begin{subfigure}[t]{0.48\linewidth}\centering
\safegraphic{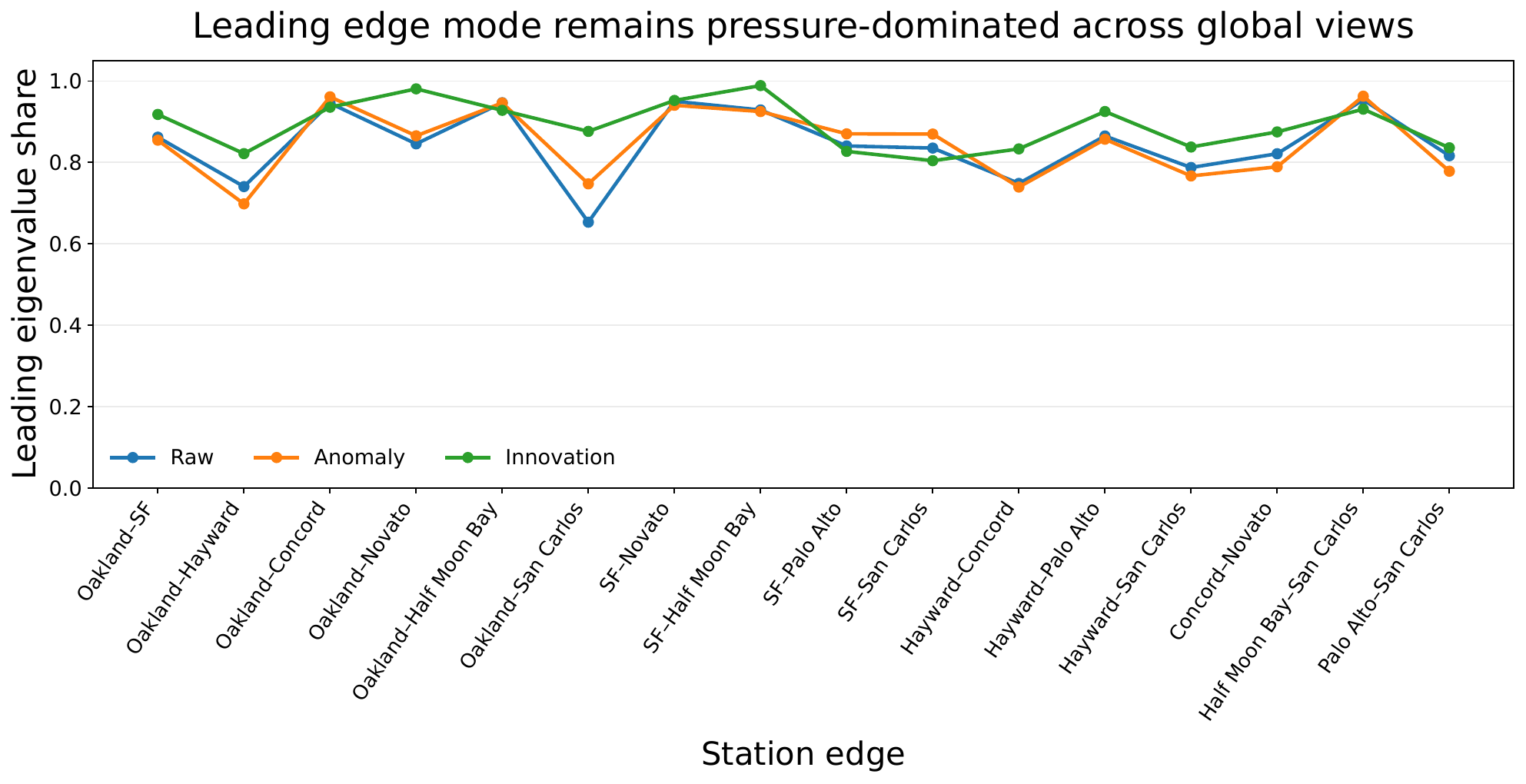}
\caption{Leading eigenmode share.}
\end{subfigure}\hfill
\begin{subfigure}[t]{0.48\linewidth}\centering
\safegraphic{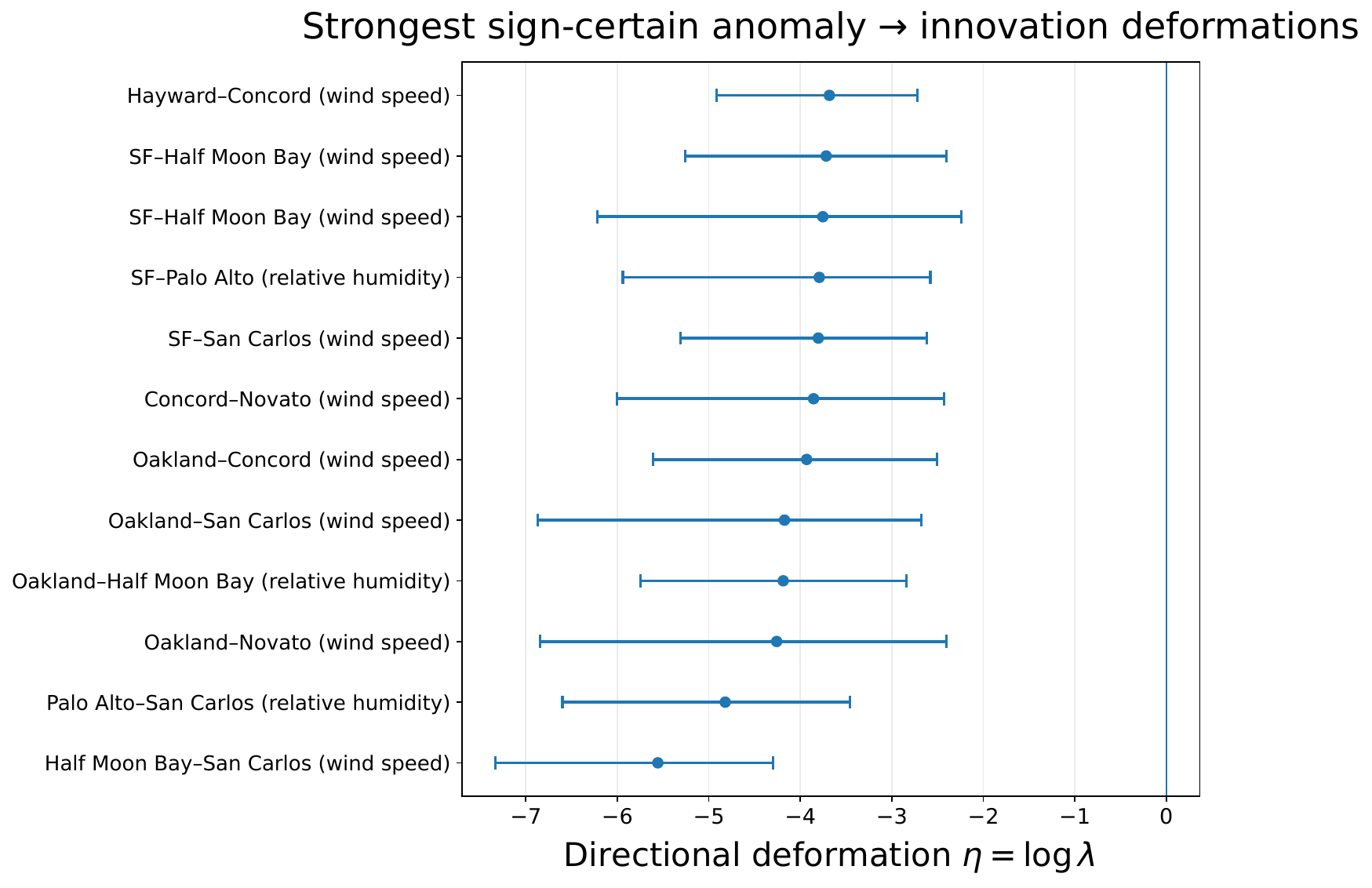}
\caption{Directional deformation.}
\end{subfigure}
\caption{
(A) The leading ordinary eigenmode accounts for a large fraction of each edge
matrix across raw, anomaly, and 12-hour innovation geometries; the corresponding
leading eigenvectors are pressure-dominated on essentially every edge.
(B) Strongest sign-certain anomaly-to-innovation generalized deformation modes.
Horizontal intervals are posterior 5-95\% quantiles; negative
$\eta=\log\lambda$ denotes weakening, positive $\eta$ strengthening, and the
labels identify the dominant meteorological variable.
}
\label{fig:meteostat_anisotropic_reconfiguration}
\end{figure}


These quantities should be interpreted as diagnostics of environmental
dependence rather than causal transport coefficients. They identify
where and along which measured-variable directions the learned
spatial organization changes, but they do not by themselves establish the
physical mechanism producing that change.

\subsection{Detecting atmospheric distribution shift}
\label{sec:ood}

As a secondary diagnostic, we ask whether the anomaly geometry learned from
the reference period can identify future structurally unusual days.
Out-of-distribution (OOD) labels are defined independently from feature-space
novelty. For a 3-hour field \(y\), the structural score is the
posterior-mean anomaly graph energy
\[
S_{\bar Q}(y)=y^\top\bar Q y,
\qquad
\bar Q=Q(\bar W),
\]
where \(\bar W\) is the posterior mean under the anomaly reference fit. We
evaluate the ranking OOD versus in-distribution (ID) days using the area
under the receiver operating characteristic curve (AUROC) and the area under
the precision-recall curve (AUPRC).  AUROC measures how well the score ranks
OOD days above ID days across all classification thresholds, with $0.5$
corresponding to random ranking and $1$ to perfect separation.  AUPRC summarizes
the tradeoff between precision and recall for the OOD class and is particularly
informative when the two classes are imbalanced.  Larger values of either
metric therefore indicate better discrimination of structurally unusual days.

Validation selects a trimmed-mean daily score that removes the largest
3-hour value before averaging.  On the chronologically held-out test period,
it achieves AUROC $0.835$ and AUPRC $0.848$. At the more conservative operating
point selected under a validation-ID false-positive-rate (FPR) budget of \(5\%\), the frozen threshold yields test true-positive rate (TPR) \(0.36\) and empirical FPR \(0\). Thus the score carries substantial ranking information, although only
\(36\%\) of OOD days are detected when false alarms are strongly constrained.
Because both the statistic and threshold are selected on validation data under
chronological splitting, we interpret these results as an empirical
operating-point analysis rather than a finite-sample error-control guarantee. Full distribution-shift detection results are reported in
\Cref{app:ood}.

\subsection{ER-associated gene-module reconfiguration}
\label{sec:tcga}

We analyze 1097 unique primary breast tumors from TCGA-BRCA through UCSC Xena
\citep{tcga,xena,xenaTCGABRCAHiSeqV2}. We use five prespecified
three-gene groups chosen as compact, literature-supported representatives of
major biological programs: Immune (CD8A, GZMB, STAT1)
\citep{ahn2017shc1,wong2022stat},
Metabolic (HK2, PKM, LDHA), and Structural/EMT (COL1A1, FN1, VIM),
the latter two supported by the corresponding MSigDB Hallmark gene sets
\citep{liberzon2015};
Signalling (EGFR, MAPK1, MYC)
\citep{sanchezvega2018,xu2010myc}; and
Epigenetic (DNMT1, EZH2, HDAC1)
\citep{oh2017epigenetic}.
These labels are used only to organize the 15-gene testbed; the three-gene
groups are not claimed to be complete biological pathways or learned gene
modules. The primary analysis uses common full-cohort centering for both ER
groups; an additional sensitivity analysis removes the ER-specific means while
keeping the same scaling and frozen edge maps.

The modules contain different genes, so directly subtracting, for example,
the first gene in one module from the first gene in another would impose an
arbitrary gene-to-gene matching. We avoid that. For tumor $k$, let
$y_i^{(k)}\in\mathbb R^3$ denote the standardized expression vector of the
three genes in module $i$. Before using any ER labels, we fit a regularized
pairwise canonical correlation analysis (CCA) for each of the ten module
pairs. For an edge $e=(i,j)$, CCA provides loading vectors
$a_{e,i},a_{e,j}\in\mathbb R^3$, chosen so that the projected module scores
have maximal regularized cross-module correlation. We retain only this leading
shared one-dimensional coordinate:
\[
s_{e,i}^{(k)}=a_{e,i}^{\top}y_i^{(k)},
\qquad
s_{e,j}^{(k)}=a_{e,j}^{\top}y_j^{(k)},
\]
and define the edge contrast
\[
\delta_e(y^{(k)})
=
s_{e,i}^{(k)}-s_{e,j}^{(k)}.
\]
The loading vectors are then frozen and reused in every ER fit, every topology,
and the mean-removal sensitivity analysis. Thus the ER labels cannot choose
the projection directions that make the two groups appear different.
The general fixed-map construction, including
arbitrary edge dimension $r_e$, is given in
\Cref{app:fixedmap}; The TCGA-BRCA analysis uses the conservative choice $r_e=1$. 

Each module pair therefore has one positive posterior weight $w_e$.  We use
the complete graph $K_5$ as the primary scaffold so that all
$\binom{5}{2}=10$ module pairs are present at the same time.  All genes are
centered and scaled once using the full primary-tumor cohort.  The clinical
field used here gives $n_{\rm ER-}=179,
\
n_{\rm ER+}=601,$ 
with 317 tumors lacking a usable ER label for this comparison.

We first ask whether the same fixed module-pair coordinates exhibit different
posterior coupling strengths in ER$-$ and ER$+$ tumors.  For each scalar edge,
we compare the posterior-mean weights through
\[
D_e
=
\left|
\log\frac{\bar w_e^{+}}{\bar w_e^{-}}
\right|,
\qquad
\eta_e
=
\log\frac{\bar w_e^{+}}{\bar w_e^{-}}.
\]
Here $D_e$ measures the magnitude of the ER-dependent reconfiguration, while
the sign of $\eta_e$ indicates whether the fitted coupling is stronger in
ER$+$ or ER$-$.  Posterior intervals for $\eta_e$ quantify whether the
direction of that change is supported despite posterior uncertainty.
 
We pair 4000 retained draws from the two fitted posteriors
to form the 5-95\% interval shown in
\Cref{fig:tcga-signed-eta}. 
The mean edge AIRM over $K_5$ is $0.537$, and six of ten 90\% signed-change
intervals exclude zero. Structural/EMT-Signalling is the strongest change,
with $\eta_e=1.625$ and interval $[1.173,2.314]$, corresponding to an
approximately $5.08$-fold posterior-mean weight ratio. Full edgewise values
are reported in \Cref{app:tcga}.
The remaining four edges have positive point estimates but intervals that
cross zero, and we therefore treat them as unresolved. All monitored chains in the reported analysis pass the numerical checks:
the largest split-$\Rhat$ is $1.0003$, the smallest ESS is $11{,}043$, and no
numerical failures occur.

\begin{figure}[htbp]
\centering
\includegraphics[height=0.20\textheight, keepaspectratio]{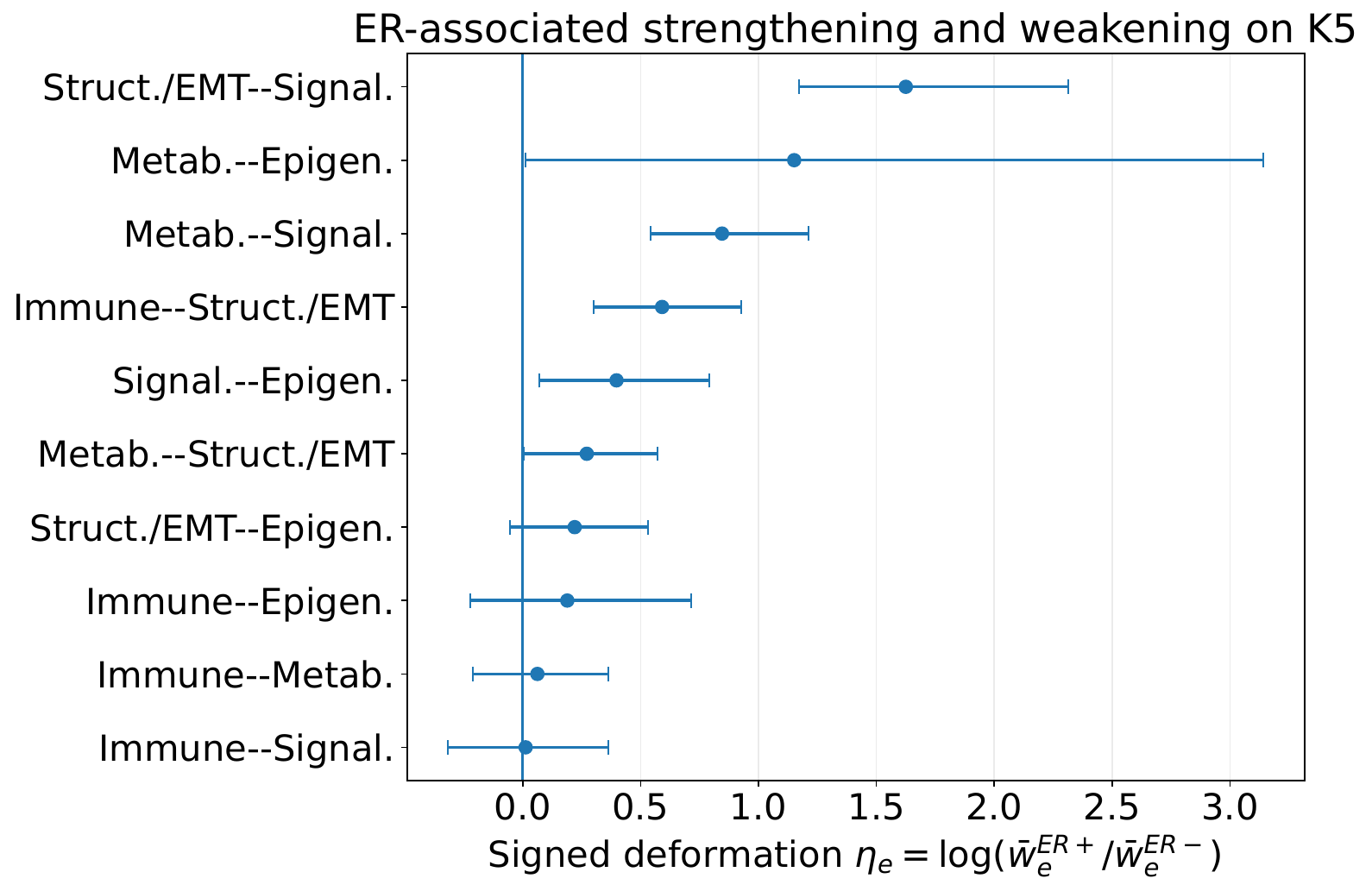}
\caption{
Each point is the signed change
$\eta_e=\log(\bar w_e^{+}/\bar w_e^{-})$ for one module pair; horizontal bars
are 90\% paired-posterior intervals. Values to the right of zero indicate
stronger fitted coupling in ER$+$ than ER$-$. Six of the ten intervals
exclude zero.}
\label{fig:tcga-signed-eta}
\end{figure}
Because BMVG conditions on a fixed graph scaffold, we test whether the
ER-associated edge reconfigurations depend strongly on the choice of that
scaffold. The primary $K_5$ model contains all ten possible module-pair
edges. As a topology-robustness check, we refit the model on every
five-cycle scaffold $C_5$, which uses the same five modules but retains only
five edges arranged in a closed loop. There are 12 distinct such cycles,
and each module-pair edge occurs in six of them.

For each edge $e$, let $D_e^{K_5}$ denote its AIRM distance in the primary
complete-graph fit and define $\overline D_e^{\,C_5}
=
\frac{1}{6}
\sum_{\substack{C_5\ni e}}
D_e^{C_5}$ 
as its average distance over the six five-cycle scaffolds containing $e$.
The two summaries agree closely across the ten module pairs
(Pearson $r=0.990$, Spearman $\rho=0.988$), and
Structural/EMT-Signalling ranks first in all six $C_5$ fits in which it
appears. Removing ER-specific means retains $94.1\%$ of the mean edge
separation and preserves all ten edge ranks (Spearman $\rho=1.000$).
Thus the leading ER-associated reconfigurations persist under both
substantial scaffold sparsification and removal of subgroup mean differences.
\begin{figure}[htbp]
\centering
\begin{subfigure}[t]{0.49\linewidth}
\centering
\safegraphic{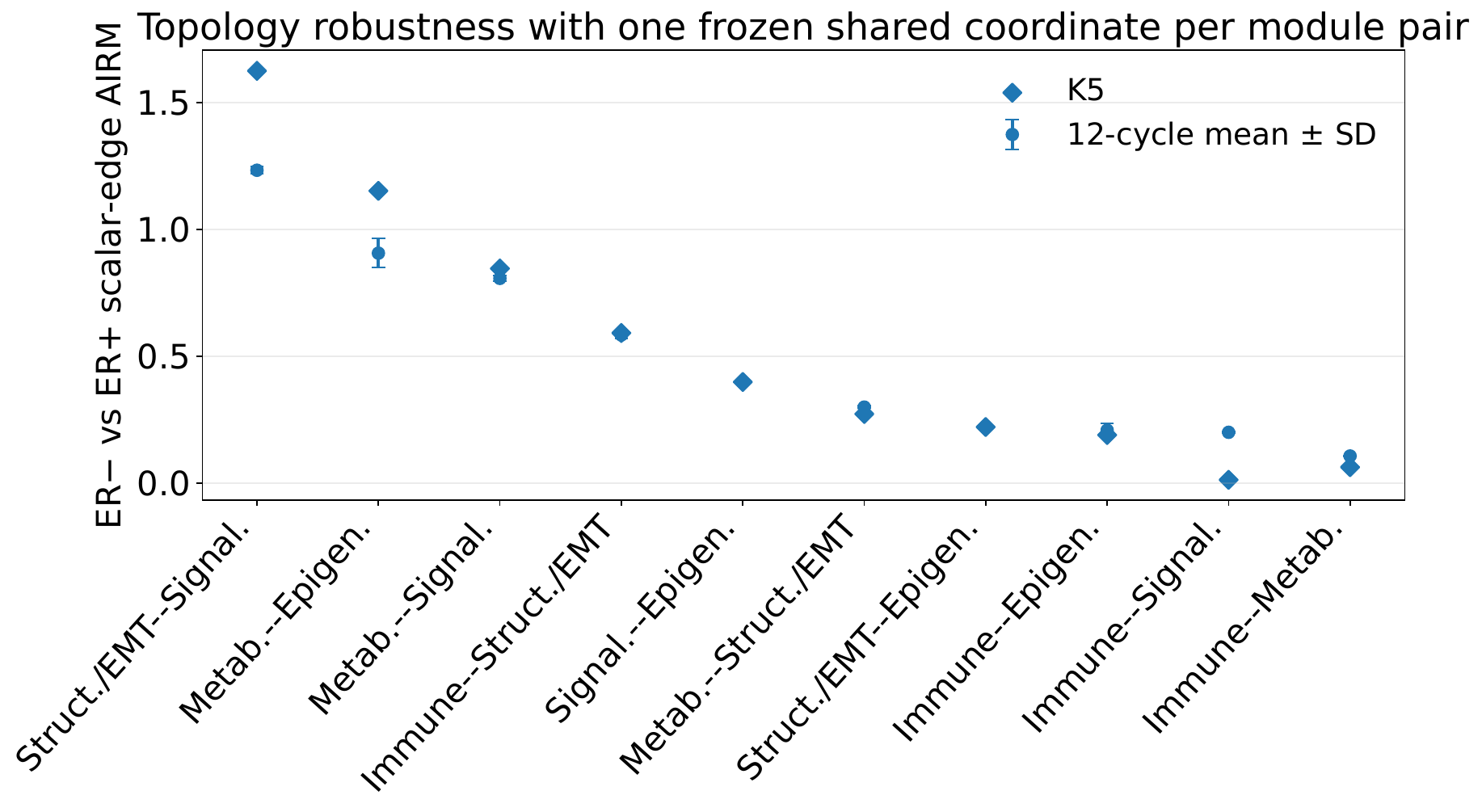}
\caption{Changing the graph scaffold.}
\end{subfigure}\hfill
\begin{subfigure}[t]{0.49\linewidth}
\centering
\safegraphic{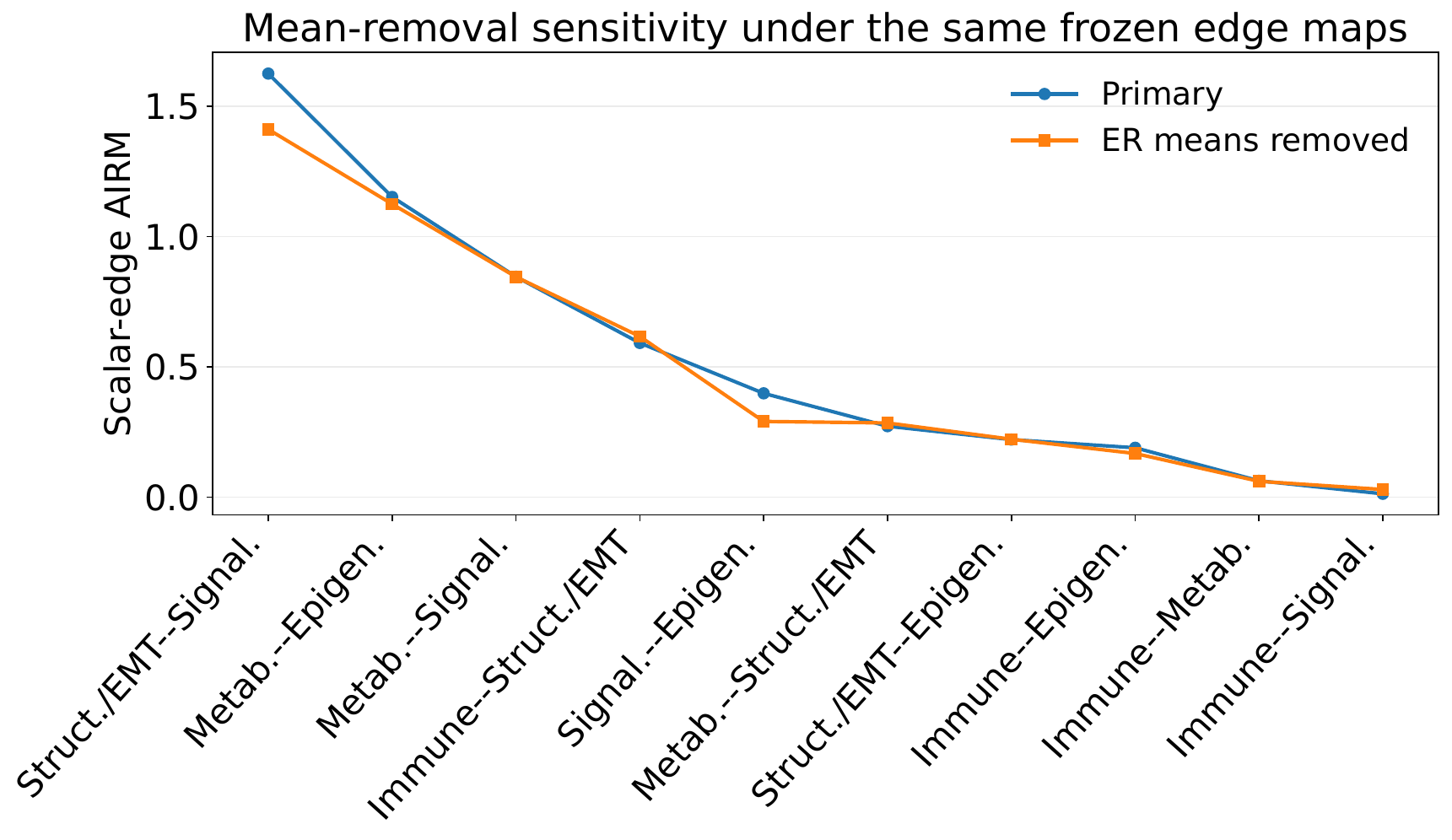}
\caption{Removing ER-specific means.}
\end{subfigure}
\caption{
(A) the complete-graph edge distances are compared with the mean and
standard deviation over the six five-cycles containing each edge.
(B) the primary and ER-mean-removed edge distances keep the same ordering.}
\label{fig:tcga-robustness}
\end{figure}

These results describe statistical dependence within this 15-gene system;
they do not establish causal interactions between modules, and the $K_5$
scaffold should not be interpreted as a biochemical interaction network.
Our conclusion is more specific: ER$-$ and ER$+$ tumors exhibit different
fitted dependence strengths for several module pairs, and the strongest
changes persist after altering the graph scaffold and removing ER-specific
mean shifts. Structural/EMT-Signalling shows the most stable reconfiguration
across these checks. Full edge tables, map audits, and sensitivity results
are reported in \Cref{app:tcga}.
\section{Discussion and limitations}
\label{sec:discussion}

The experiments show that matrix-valued edge inference captures
context-dependent changes that scalar edge weights cannot represent. In the
known-truth study, BMVG recovers orientation changes even when trace,
determinant, and ordinary eigenvalues are unchanged. In Meteostat, 12-hour
innovation geometry is substantially more separated from anomaly geometry
than raw state is from anomaly state, despite a stable pressure-dominated
leading mode. In TCGA-BRCA, the strongest ER-associated module-pair
reconfigurations persist under substantial scaffold sparsification and
removal of subgroup mean differences. Together, these results show that
posterior matrix-valued edge geometry can quantify not only how much a
relationship changes, but also which multivariate directions drive that
change.

The associated uncertainty summaries have distinct interpretations.
Posterior intervals quantify uncertainty under the fitted model, whereas in
the controlled study $R_e$ asks whether a planted change exceeds the
finite-sample separation observed between independently fitted no-change
posteriors. Neither quantity should be interpreted as a universal significance
criterion.

Several limitations remain. Graph topology is fixed rather than learned, so
all inference is conditional on the supplied scaffold. The Gaussian precision
likelihood is a working second-order model and does not imply causal
relationships. The controlled detection boundary depends on the chosen graph,
prior, sample sizes, and perturbation families. The TCGA analysis is restricted
to the specified 15-gene system and requires external-cohort validation before
broader biological conclusions are drawn. Finally, $\Rhat$ and ESS support
numerical stability of the sampled posteriors but do not provide a finite-time
MCMC convergence guarantee.
\paragraph{\textbf{Conclusion}.}
Matrix-valued graph posteriors characterize context-dependent multivariate
change through deformation magnitude, finite-sample uncertainty, and signed
directional reconfiguration. Across the controlled, Meteostat, and TCGA
studies, these quantities reveal changes that scalar edge summaries cannot
retain. Larger biological systems, external-cohort validation, and learned
graph scaffolds are important directions for future work.
\section*{Reproducibility Statement}
All reported experiments use fixed graph scaffolds, explicit priors,
prespecified diagnostic gates, and archived random seeds. \Cref{app:inference} reports the numerical sampler configuration underlying all reported results.
The controlled study records all generating seeds and nested sample sizes.
The Meteostat analysis records the station identifiers, chronological
splits, support transforms, training-derived climatology and regime model,
and graph construction.  The TCGA analysis records the source cohort,
gene list, normalization rule, CCA regularization, frozen maps,
topology-sensitivity fits, and mean-removal sensitivity settings.  Released result directories contain the fitted
configuration and posterior summaries used to generate each table and
figure. 
\paragraph{\textbf{Code availability}.}
Code and processed outputs are available at
\url{https://github.com/papridey/bayesian-matrix-valued-graphs}.

\section*{AI Use Statement}
Generative AI tools were used for language editing, organization, and
code-development assistance. All mathematical statements, experimental
designs, numerical results, code execution, verification, interpretation, and
final scientific claims remain the sole responsibility of the author.

\bibliographystyle{alpha}
\bibliography{references}

\appendix

\section{Theory and Derivations}
\label{app:theory}

\subsection{Sufficiency of the empirical second moment}
For independent centered observations
$Y_r\sim\mathcal N(0,Q^{-1})$,
\[
p(Y_1,\ldots,Y_n\mid Q)
\propto
(\det Q)^{n/2}
\exp\left\{
-\frac12
\sum_{r=1}^{n}
Y_r^\top QY_r
\right\}.
\]
Since
\[
\sum_{r=1}^{n}
Y_r^\top QY_r
=
\sum_{r=1}^{n}
\tr(QY_rY_r^\top)
=
n\tr(SQ),
\]
the likelihood can be written as
\[
p(Y_1,\ldots,Y_n\mid W)
\propto
\exp\left\{
\frac n2
[
\log\det Q(W)-\tr\{SQ(W)\}
]
\right\}.
\]
The observations enter the likelihood only through
$\sum_rY_rY_r^\top=nS$; the Fisher-Neyman factorization theorem therefore
gives sufficiency.

\subsection{All-edge score}
Using $Q(W)=R+\sum_{f\in E}B_fW_fB_f^\top,$ 
a perturbation $H_e$ of the edge block $W_e$ gives
\[
D_{W_e}Q(W)[H_e]
=
B_eH_eB_e^\top.
\]
Using the matrix differential identities
\[
d\log\det Q
=
\tr(Q^{-1}dQ),
\qquad
d\,\tr(SQ)
=
\tr(S\,dQ)
\]
and eq~\eqref{eq:likelihood}, we get
\[
D_{W_e}\ell(W)[H_e]
=
\frac n2
\tr\!\left[
\{Q^{-1}-S\}B_eH_eB_e^\top
\right].
\]
Using cyclic invariance of the trace,
\[
D_{W_e}\ell(W)[H_e]
=
\tr\!\left[
\frac n2
B_e^\top(Q^{-1}-S)B_e\,H_e
\right].
\]

By definition, the Euclidean matrix gradient with respect to $W_e$ is the
unique symmetric matrix $\nabla_{W_e}\ell(W)$ satisfying
\[
D_{W_e}\ell(W)[H_e]
=
\langle \nabla_{W_e}\ell(W),H_e\rangle_F
=
\tr\!\left[
\nabla_{W_e}\ell(W)^\top H_e
\right]
\]
for every symmetric perturbation $H_e$.  Since
$B_e^\top(Q^{-1}-S)B_e$ is symmetric, comparison of the two expressions
yields
\[
\nabla_{W_e}\ell(W)
=
\frac n2
B_e^\top(Q^{-1}-S)B_e,
\]
which is eq~\eqref{eq:edge-score}.

A perturbation $H_f$ of the edge block $W_f$ induces the perturbation
\[
H_Q
:=
D_{W_f}Q(W)[H_f]
=
B_fH_fB_f^\top
\]
of the full precision matrix.  Applying the chain rule to the composition
$W_f\mapsto Q(W)\mapsto Q(W)^{-1}$ and using
\[
D_Q(Q^{-1})[H_Q]
=
-Q^{-1}H_QQ^{-1},
\]
we obtain
\[
D_{W_f}[Q(W)^{-1}][H_f]
=
-Q^{-1}B_fH_fB_f^\top Q^{-1}.
\]
Note that $S$ does not depend on $W_f$. Therefore, we obtain
\[
D_{W_f}
\bigl[\nabla_{W_e}\ell(W)\bigr][H_f]
=
\frac n2 B_e^\top
D_{W_f}[Q(W)^{-1}][H_f]B_e
=
-\frac n2
B_e^\top Q^{-1}B_f
H_f
B_f^\top Q^{-1}B_e.
\]
For $e\neq f$ this term is generally nonzero, showing explicitly that the
likelihood couples distinct edge blocks through the common global precision
inverse $Q(W)^{-1}$.

\subsection{Gaussian Fisher geometry and the pullback metric}

For one observation with log likelihood
\[
\ell_Q(y)
=
\frac12\log\det Q
-
\frac12y^\top Qy+C,
\]
the directional score is
\[
D\ell_Q[H]
=
\frac12\tr(Q^{-1}H)
-
\frac12 y^\top Hy.
\]
Using Gaussian fourth-moment identities gives the Fisher information metric of the multivariate normal model~\cite{Sko84}, i.e., 
\[
g_Q^{\mathrm F}(H,K)
=
\E[
D\ell_Q[H]D\ell_Q[K]
]
=
\frac12
\tr(Q^{-1}HQ^{-1}K).
\]
For $n$ independent observations this is multiplied by $n$.

A tangent vector
$U=(U_e)_{e\in E}$ in edge space induces $L(U)=\sum_eB_eU_eB_e^\top.$
Substitution into the Fisher metric yields
\[
g_W^{\mathrm{pull}}(U,V)
=
\frac n2
\tr\left[
Q^{-1}L(U)Q^{-1}L(V)
\right],
\]
which contains cross-edge terms and is therefore not the block-diagonal
product metric used for the proposal.

\subsection{Reference measure of the product AIRM}

For one SPD matrix $W\in\SPD^d$, the affine-invariant Riemannian volume is,
up to a constant~\cite{Mui82,pennec2006},
\[
d\mathrm{vol}_{\AIRM}(W)
=
(\det W)^{-(d+1)/2}\,dW,
\]
where $dW$ is Lebesgue measure on the independent symmetric coordinates.
For all edges,
\[
d\mathrm{vol}_{\rm prod}(W)
\propto
\prod_{e\in E}
(\det W_e)^{-(d+1)/2}
\,dW_e.
\]

The Lebesgue Wishart factor contributes $(\det W_e)^{(\nu-d-1)/2}.$
Converting the posterior density from Lebesgue measure to product-AIRM volume
therefore multiplies it by
$(\det W_e)^{(d+1)/2}$, producing the power $\frac{\nu-d-1}{2}
+
\frac{d+1}{2}
=
\frac{\nu}{2}.$
This yields the $-\frac{\nu}{2}\log\det W_e$ term in eq~
\eqref{eq:intrinsic-potential}.

\subsection{Proof of Proposition~\ref{prop:distance-spectrum}}

Let $C=A^{-1/2}BA^{-1/2}\in\SPD^d.$ 
The generalized eigenvalues of $(B,A)$ are precisely the ordinary eigenvalues
$\lambda_k$ of $C$. Write $C=U\diag(\lambda_1,\ldots,\lambda_d)U^\top.$ 
Then $\log C
=
U
\diag(\log\lambda_1,\ldots,\log\lambda_d)
U^\top.$
Orthogonal invariance of the Frobenius norm gives
\[
d_{\AIRM}(A,B)^2
=
\|\log C\|_F^2
=
\sum_{k=1}^{d}
(\log\lambda_k)^2.
\]
This proves eq~\eqref{eq:distance-spectrum}.

\subsection{Fixed node-to-edge maps and unequal node dimensions}
\label{app:fixedmap}

The simple edge difference $y_i-y_j$ assumes that all nodes have the same
dimension and directly aligned coordinates, as in Meteostat.  For
heterogeneous nodes, let $y_i\in\R^{d_i}$ and
$p=\sum_{i\in V}d_i$.  For edge $e=(i,j)$, choose a shared edge-space
dimension $r_e$ and fixed linear maps
\[
A_{e,i}\in\R^{r_e\times d_i},
\qquad
A_{e,j}\in\R^{r_e\times d_j},
\]
and define
\[
\delta_e(y)=A_{e,i}y_i-A_{e,j}y_j.
\]

Let $D_e\in\R^{r_e\times p}$ be the block-row operator containing
$A_{e,i}$ in the columns of node $i$, $-A_{e,j}$ in those of node $j$,
and zeros elsewhere, so that $D_ey=\delta_e(y)$.  Define
$\widetilde B_e=D_e^\top$.  For $W_e\in\SPD^{r_e}$,
\begin{equation}
\mathcal E_e(y)
=
\delta_e(y)^\top W_e\delta_e(y)
=
y^\top\widetilde B_eW_e\widetilde B_e^\top y.
\label{eq:mapped-energy}
\end{equation}
Hence
\begin{equation}
L_A(W)
=
\sum_{e\in E}\widetilde B_eW_e\widetilde B_e^\top,
\qquad
Q_A(W)=L_A(W)+R.
\label{eq:mapped-precision}
\end{equation}
Since each summand in $L_A(W)$ is positive semidefinite and $R\succ0$,
$Q_A(W)\succ0$.

For the same Gaussian working likelihood
$Y_r\mid W\sim N(0,Q_A(W)^{-1})$, the edge gradient is
\begin{equation}
\nabla_{W_e}\ell(W)
=
\frac n2
\widetilde B_e^\top
\{Q_A(W)^{-1}-S\}
\widetilde B_e.
\label{eq:mapped-edge-score}
\end{equation}
Thus the shared precision solve, global cross-edge dependence, and
Metropolis-Hastings correction are unchanged; only $B_e$ is replaced by
$\widetilde B_e$.  With independent
$W_e\sim\mathcal W_{r_e}(\nu_e,\Psi_e)$, the intrinsic negative log
posterior relative to product-AIRM volume is
\begin{equation}
\Phi_A(W)
=
\frac n2\{\tr(SQ_A(W))-\log\det Q_A(W)\}
-\frac12\sum_{e\in E}\nu_e\log\det W_e
+\frac12\sum_{e\in E}\tr(\Psi_e^{-1}W_e).
\label{eq:mapped-intrinsic-potential}
\end{equation}

For contexts $a$ and $b$,
\[
W_e^bv_k=\lambda_kW_e^av_k,
\qquad
\eta_k=\log\lambda_k,
\]
and
\[
d_{\AIRM}(W_e^a,W_e^b)^2
=
\sum_{k=1}^{r_e}\eta_k^2.
\]

\paragraph{Example.}
If $y_1=(y_{11},y_{12})^\top$, $y_2=y_{21}$,
$A_{e,1}=[1\;2]$, and $A_{e,2}=[3]$, then
\[
\delta_e(y)=y_{11}+2y_{12}-3y_{21},
\qquad
\widetilde B_e=
\begin{bmatrix}
1&2&-3
\end{bmatrix}^{\!\top}.
\]
For a scalar edge weight $w_e>0$,
\[
\mathcal E_e(y)
=
w_e(y_{11}+2y_{12}-3y_{21})^2
=
y^\top\widetilde B_ew_e\widetilde B_e^\top y.
\]
This illustrates that the fixed-map construction first places heterogeneous
nodes in a common edge coordinate and then applies the same graph model.

\begin{proposition}
\label{prop:mapped-gauge}
Fix an edge $e$ and let $C_e\in\R^{r_e\times r_e}$ be invertible.  Replace
\[
A_{e,i}'=C_eA_{e,i},
\qquad
A_{e,j}'=C_eA_{e,j},
\qquad
W_e'=C_e^{-\top}W_eC_e^{-1}.
\]
Then the edge energy in \eqref{eq:mapped-energy}, the contribution of edge
$e$ to $Q_A(W)$, the AIRM distance between two context weights, and their
generalized eigenvalues are unchanged.
\end{proposition}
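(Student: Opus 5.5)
The plan is to check each of the four claims by direct substitution, starting with the edge operator. With the replaced maps $A_{e,i}'=C_eA_{e,i}$ and $A_{e,j}'=C_eA_{e,j}$, the block-row operator becomes $D_e'=C_eD_e$, because every nonzero block is multiplied on the left by $C_e$. Consequently $\delta_e'(y)=C_e\delta_e(y)$ and $\widetilde B_e'=D_e'^\top=\widetilde B_eC_e^\top$.

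The edge energy in \eqref{eq:mapped-energy} then transforms as
\[
\delta_e'(y)^\top W_e'\delta_e'(y)=\delta_e(y)^\top C_e^\top C_e^{-\top}W_eC_e^{-1}C_e\delta_e(y)=\delta_e(y)^\top W_e\delta_e(y).
\]
The contribution of edge $e$ to $Q_A(W)$ is
\[
\widetilde B_e'W_e'\widetilde B_e'^\top=\widetilde B_eC_e^\top C_e^{-\top}W_eC_e^{-1}C_e\widetilde B_e^\top=\widetilde B_eW_e\widetilde B_e^\top.
\]
Since the other summands in \eqref{eq:mapped-precision} and $R$ are untouched, $Q_A$ itself is unchanged. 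I would also record that $W_e'$ is again SPD, since it is a congruence of an SPD matrix by an invertible matrix.

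For the comparison quantities, the key point is that the same $C_e$ is applied in both contexts, because the maps are fixed and shared across contexts. So $W_e^{a\prime}=C_e^{-\top}W_e^aC_e^{-1}$ and $W_e^{b\prime}=C_e^{-\top}W_e^bC_e^{-1}$. I would handle the generalized eigenproblem first. Write $v'=C_ev$. Then
\[
W_e^{b\prime}v'=C_e^{-\top}W_e^bv
\quad\text{and}\quad
\lambda W_e^{a\prime}v'=\lambda C_e^{-\top}W_e^av.
\]
Cancelling the invertible factor $C_e^{-\top}$ shows that $(\lambda,v)$ solves $W_e^bv=\lambda W_e^av$ if and only if $(\lambda,C_ev)$ solves the primed problem. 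The generalized eigenvalues, with multiplicity, therefore coincide. The eigenvectors are transported by $C_e$, which is consistent with $\delta_e'=C_e\delta_e$.

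Invariance of the AIRM distance then follows from Proposition~\ref{prop:distance-spectrum}, or equivalently from its mapped version stated just above. Both give $d_{\AIRM}(W_e^{a\prime},W_e^{b\prime})^2=\sum_k(\log\lambda_k)^2$, and this sum depends only on the generalized eigenvalues. This route avoids manipulating matrix square roots under a non-orthogonal $C_e$; it is the affine invariance of the metric under the congruence $X\mapsto C_e^{-\top}XC_e^{-1}$.

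The only step that needs care is the eigenvalue correspondence: one must note that the multiplicities match, because $v\mapsto C_ev$ is a linear isomorphism of each eigenspace. Everything else is routine algebra.
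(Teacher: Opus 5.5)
Your proposal is correct and follows the paper's proof. Both use the substitution $\delta_e'(y)=C_e\delta_e(y)$, equivalently $\widetilde B_e'=\widetilde B_eC_e^\top$, to get invariance of the edge energy and of the contribution of edge $e$ to $Q_A(W)$. The one difference is in the comparison quantities: the paper simply cites congruence invariance of AIRM and of generalized eigenvalues, whereas you verify the eigenpair correspondence $(\lambda,v)\mapsto(\lambda,C_ev)$ directly and then deduce AIRM invariance from Proposition~\ref{prop:distance-spectrum}, which makes the argument self-contained. Matching eigenspace dimensions does give matching multiplicities here, because the symmetric-definite pencil is diagonalizable.
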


\begin{proof}
Since $\delta_e'(y)=C_e\delta_e(y)$,
\[
\delta_e'(y)^\top W_e'\delta_e'(y)
=
\delta_e(y)^\top W_e\delta_e(y).
\]
Equivalently,
$\widetilde B_e'=\widetilde B_eC_e^\top$ and therefore
$\widetilde B_e'W_e'\widetilde B_e'^\top
=\widetilde B_eW_e\widetilde B_e^\top$.
The remaining claims follow from congruence invariance of AIRM and of the generalized eigenvalues~\cite{bhatia2007,Moa05}.
\end{proof}

\begin{remark}
\label{rem:cca-normalization}
The maps $A_{e,i},A_{e,j}$ are fixed when the posterior over $W$ is
computed.  If maps and weights were learned jointly,
\Cref{prop:mapped-gauge} would yield multiple parameterizations with the
same edge energy, so additional normalization or orientation constraints
would be required for identifiability.
\end{remark}

In TCGA-BRCA, $d_i=3$ and $r_e=1$.  The maps are leading
regularized-CCA directions learned from pooled expression data without ER
labels and frozen before edge-weight inference.  Consequently $W_e=w_e>0$
and
\[
d_{\AIRM}(w_e^-,w_e^+)
=
\left|\log\frac{w_e^+}{w_e^-}\right|,
\qquad
\eta_e=\log\frac{w_e^+}{w_e^-}.
\]

\subsection{Identifiability of the edge blocks}
\label{sec:identifiability}

Throughout, $\mathcal{S}^{k}$ denotes the space of real symmetric $k\times k$
matrices and $\mathcal{S}^{k}_{++}$ the open cone of symmetric positive-definite
matrices, so $\dim\mathcal{S}^{k}=\binom{k+1}{2}$. Recall the graph precision
\[
  Q(W)=R+\sum_{e\in E}B_eW_eB_e^{\top},
  \qquad R\succ 0 \text{ fixed},\quad W_e\in\mathcal{S}^{r_e}_{++},
\]
and write $N$ for the ambient dimension: $N=md$ in the incidence model of
\Cref{sec:model}, and $N=p=\sum_{i\in V}d_i$ in the fixed-map model of
\Cref{app:fixedmap}, where $B_e$ is replaced by $\widetilde{B}_e=D_e^{\top}$.
Define the linear \emph{edge-assembly map}
\begin{equation}\label{eq:assembly-map}
  \mathcal{L}:\ \bigoplus_{e\in E}\mathcal{S}^{r_e}\longrightarrow \mathcal{S}^{N},
  \qquad
  \mathcal{L}(H)=\sum_{e\in E}B_eH_eB_e^{\top},\quad H=(H_e)_{e\in E}.
\end{equation}
Since $R$ is fixed, $Q(W)=R+\mathcal{L}(W)$, so all statements below transfer
verbatim to the fixed-map model upon replacing $B_e$ by $\widetilde{B}_e$.

\begin{lemma}
\label{lem:reduction}
Let $R\succ 0$ be fixed and known. Then $W=(W_e)_{e\in E}\in\prod_{e}\mathcal{S}^{r_e}_{++}$
is globally identifiable from the law of $Y\sim\mathcal{N}\!\bigl(0,Q(W)^{-1}\bigr)$
if and only if the map $\mathcal{L}$ in \eqref{eq:assembly-map} is injective on
$\bigoplus_{e}\mathcal{S}^{r_e}$.
\end{lemma}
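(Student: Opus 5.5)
The plan is to reduce identifiability of $W$ to identifiability of the precision matrix, and then to linear algebra on $\mathcal{L}$. First, the law $\mathcal N(0,Q(W)^{-1})$ determines its covariance $Q(W)^{-1}$, and the covariance determines $Q(W)$; conversely, $Q(W)$ determines the law. So $W\mapsto\mathcal N(0,Q(W)^{-1})$ is injective on $\prod_e\mathcal S^{r_e}_{++}$ if and only if $W\mapsto Q(W)=R+\mathcal L(W)$ is injective there. Since $R$ is fixed and known, this is equivalent to injectivity of $\mathcal L$ restricted to the cone $\prod_e\mathcal S^{r_e}_{++}$.

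\emph{Sufficiency.} If $\mathcal L$ is injective on all of $\bigoplus_e\mathcal S^{r_e}$, it is in particular injective on the subset $\prod_e\mathcal S^{r_e}_{++}$. Hence equal laws give $\mathcal L(W)=\mathcal L(W')$, and so $W=W'$.

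\emph{Necessity.} Argue by contraposition. Suppose there is a nonzero $H\in\ker\mathcal L$; it may be indefinite, so it cannot be used directly. Fix any $W\in\prod_e\mathcal S^{r_e}_{++}$, for instance $W_e=I_{r_e}$. Because the product cone is open in $\bigoplus_e\mathcal S^{r_e}$, there is $t>0$ small enough that $W_e+tH_e\succ0$ for every $e$; it suffices to take $t\,\max_e\|H_e\|_{\mathrm{op}}<\min_e\lambda_{\min}(W_e)$. Then $W'=W+tH$ is a feasible parameter distinct from $W$, and linearity gives $\mathcal L(W')=\mathcal L(W)+t\mathcal L(H)=\mathcal L(W)$. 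Therefore $Q(W)=Q(W')$, the two laws coincide, and identifiability fails.

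The main subtlety is exactly this necessity step. It passes from injectivity on the open cone to injectivity on the whole linear space $\bigoplus_e\mathcal S^{r_e}$, and it works only because the parameter set is an open subset of a vector space and $\mathcal L$ is linear. Nothing else needs care, apart from noting that every $Q(W)$ is positive definite (each summand $B_eW_eB_e^\top$ is positive semidefinite and $R\succ0$), so that each law is well defined. The same argument applies verbatim to the fixed-map model with $\widetilde B_e$ in place of $B_e$.
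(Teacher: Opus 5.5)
Your proof is correct and follows essentially the same route as the paper: you reduce equality of laws to $\mathcal{L}(W-W')=0$, and for necessity you produce two feasible parameters that differ by a nonzero kernel element. The only cosmetic difference is that the paper takes a large $\tau$ and uses $\tau I_{r_e}\pm\tfrac12 H_e$, whereas you fix a base point and shrink $tH$; the two are equivalent up to rescaling.
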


\begin{proof}
For any $W\in\prod_e\mathcal{S}^{r_e}_{++}$ each summand $B_eW_eB_e^{\top}$ is
positive semidefinite and $R\succ0$, so $Q(W)\succ0$ and the law is a
nondegenerate centered Gaussian. Such a law is determined by, and determines,
its covariance $Q(W)^{-1}$, hence $Q(W)$. Because $W\mapsto\mathcal{L}(W)$ is
linear and $R$ is fixed,
\[
  \mathrm{law}(W)=\mathrm{law}(W')
  \iff Q(W)=Q(W')
  \iff \mathcal{L}(W-W')=0 .
\]
If $\mathcal{L}$ is injective, equal laws force $W=W'$, giving identifiability.
Conversely, suppose $\mathcal{L}(H)=0$ for some
$H=(H_e)\neq 0$ with $H_e\in\mathcal{S}^{r_e}$. Pick
$\tau>\tfrac12\max_{e}\lVert H_e\rVert_2$ and set
$W_e=\tau I_{r_e}+\tfrac12 H_e$ and $W_e'=\tau I_{r_e}-\tfrac12 H_e$. Then
$W_e,W_e'\succ 0$, $W\neq W'$, yet
$\mathcal{L}(W-W')=\mathcal{L}(H)=0$, so the two laws coincide and $W$ is not
identifiable. Global (not merely local) identifiability is automatic since
$\mathcal{L}$ is linear.
\end{proof}

\begin{theorem}
\label{thm:identif}
For each edge let
$\mathcal{V}_e:=\{\,B_eHB_e^{\top}:H\in\mathcal{S}^{r_e}\,\}\subseteq\mathcal{S}^{N}$.
Then $\mathcal{L}$ is injective, equivalently, $W$ is identifiable if and only
if both of the following hold:
\begin{enumerate}
  \item[\textup{(a)}] each $B_e$ has full column rank $r_e$; and
  \item[\textup{(b)}] the subspaces $\{\mathcal{V}_e\}_{e\in E}$ form a direct sum,
        i.e.\ $\mathcal{V}_e\cap\sum_{f\neq e}\mathcal{V}_f=\{0\}$ for every $e$;
        equivalently $\dim\sum_{e}\mathcal{V}_e=\sum_{e}\binom{r_e+1}{2}$.
\end{enumerate}
In particular, a necessary condition is the dimension count
\begin{equation}\label{eq:count}
  \sum_{e\in E}\binom{r_e+1}{2}\ \le\ \binom{N+1}{2}.
\end{equation}
\end{theorem}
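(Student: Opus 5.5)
By Lemma~\ref{lem:reduction}, identifiability of $W$ is equivalent to injectivity of $\mathcal{L}$ on $\bigoplus_e\mathcal{S}^{r_e}$. The plan is therefore to prove the purely linear-algebraic statement: $\mathcal{L}$ is injective if and only if (a) and (b) hold. We factor $\mathcal{L}$ through the edgewise maps
\[
\phi_e:\mathcal{S}^{r_e}\to\mathcal{S}^N,\qquad \phi_e(H)=B_eHB_e^\top .
\]
This gives $\mathcal{L}(H)=\sum_e\phi_e(H_e)$, with $\operatorname{im}\phi_e=\mathcal{V}_e$. The argument then reduces to a standard fact: a sum map $\bigoplus_e X_e\to Y$, $(x_e)\mapsto\sum_e\phi_e(x_e)$, is injective if and only if each $\phi_e$ is injective and the images $\phi_e(X_e)$ form a direct sum.

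\textbf{Sufficiency.} Assume (a) and (b), and suppose $\mathcal{L}(H)=0$. For each $e$ we have $\phi_e(H_e)=-\sum_{f\ne e}\phi_f(H_f)$, which lies in $\mathcal{V}_e\cap\sum_{f\ne e}\mathcal{V}_f=\{0\}$. So $B_eH_eB_e^\top=0$. By (a), $B_e$ has a left inverse $B_e^+=(B_e^\top B_e)^{-1}B_e^\top$. Multiplying on the left by $B_e^+$ and on the right by $(B_e^+)^\top$ gives $H_e=0$.

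\textbf{Necessity.} We argue by contrapositive in two cases.
\begin{itemize}
\item If (a) fails for some $e$, pick $v\ne0$ with $B_ev=0$. Then $H_e=vv^\top\in\mathcal{S}^{r_e}$ satisfies $\phi_e(H_e)=0$. Taking all other blocks zero gives a nonzero element of $\ker\mathcal{L}$.
\item If (a) holds but (b) fails, there is $0\ne X=\phi_e(H_e)=\sum_{f\ne e}\phi_f(H_f)$. Setting $H'_e=H_e$ and $H'_f=-H_f$ for $f\ne e$ gives $\mathcal{L}(H')=0$, and $H'\ne0$ because $H_e\ne0$ (since $X\ne0$).
\end{itemize}

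\textbf{Dimension reformulation and consequence.} Under (a), each $\phi_e$ is injective by the left-inverse argument, so $\dim\mathcal{V}_e=\binom{r_e+1}{2}$. For subspaces, $\dim\sum_e\mathcal{V}_e\le\sum_e\dim\mathcal{V}_e$, with equality exactly when the sum is direct; this yields the dimension form of (b). Injectivity of $\mathcal{L}$ forces $\dim\operatorname{dom}\mathcal{L}\le\dim\mathcal{S}^N$, which is the count \eqref{eq:count}.

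The main care point is the equivalence of the two formulations of (b). It needs (a), because the dimension equality $\dim\sum_e\mathcal{V}_e=\sum_e\binom{r_e+1}{2}$ uses $\dim\mathcal{V}_e=\binom{r_e+1}{2}$. I would state that the dimension form is read jointly with (a): the equality alone actually implies (a), since it is impossible if some $\dim\mathcal{V}_e$ is smaller. With that convention the equivalences are routine.
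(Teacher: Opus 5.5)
Your proof is correct and follows the paper's argument essentially step for step. It uses the same single-edge left-inverse/kernel argument for (a), the same direct-sum argument for sufficiency, and the same two-case contrapositive for necessity. Your departures are minor but sound: you obtain \eqref{eq:count} directly from injectivity of $\mathcal{L}$ rather than from the dimension form of (b). You also rightly note that the ``equivalently'' in (b) presupposes (a), since $\dim\sum_e\mathcal{V}_e=\sum_e\binom{r_e+1}{2}$ on its own already forces both (a) and the direct-sum property.
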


\begin{proof}
\emph{Single edge.} The map $H\mapsto B_eHB_e^{\top}$ is injective on
$\mathcal{S}^{r_e}$ if and only if $B_e$ has full column rank. If $B_e^{+}$ is a
left inverse ($B_e^{+}B_e=I_{r_e}$) then
$B_eHB_e^{\top}=0\Rightarrow H=B_e^{+}(B_eHB_e^{\top})(B_e^{+})^{\top}=0$.
If instead $0\neq u\in\ker B_e$, then $H=uu^{\top}\neq0$ but
$B_eHB_e^{\top}=(B_eu)(B_eu)^{\top}=0$. When $B_e$ has full column rank,
$\dim\mathcal{V}_e=\dim\mathcal{S}^{r_e}=\binom{r_e+1}{2}$.

\emph{($\Leftarrow$).} Assume (a) and (b) and suppose
$\sum_{e}B_eH_eB_e^{\top}=0$. Set $X_e:=B_eH_eB_e^{\top}\in\mathcal{V}_e$; then
$\sum_e X_e=0$ with $X_e\in\mathcal{V}_e$, and the direct-sum property (b) forces
$X_e=0$ for every $e$. By (a) each $H_e=0$, so $\mathcal{L}$ is injective.

\emph{($\Rightarrow$).} Assume $\mathcal{L}$ injective. If some $B_e$ failed to
have full column rank, the single-edge argument would produce $H_e\neq0$ with
$B_eH_eB_e^{\top}=0$, hence $\mathcal{L}(0,\dots,H_e,\dots,0)=0$, contradicting
injectivity; thus (a) holds. Given (a), suppose (b) failed, so
$\sum_e X_e=0$ with $X_e\in\mathcal{V}_e$ not all zero. Writing
$X_e=B_eH_eB_e^{\top}$ and using (a) to recover $H_e$ (unique, and nonzero
exactly when $X_e\neq0$) yields $H=(H_e)\neq0$ with $\mathcal{L}(H)=0$, again a
contradiction; thus (b) holds.

Finally, (b) states $\dim\sum_e\mathcal{V}_e=\sum_e\binom{r_e+1}{2}$, and
$\sum_e\mathcal{V}_e\subseteq\mathcal{S}^{N}$ gives \eqref{eq:count}.
\end{proof}

We now specialize to the two models used in the paper. The first is
unconditional; the second has a genuine failure mode, certified away by a
finite rank check.

\begin{corollary}
\label{cor:incidence}
Let $B_e=b_e\otimes I_d$ with $b_e=\mathbf{e}_i-\mathbf{e}_j$ for
$e=\{i,j\}$, $G$ a simple graph, and any $d\ge 1$. Then $\mathcal{L}$ is
injective and $W$ is globally identifiable, with no condition on $G$ (beyond
simplicity) or on $d$.
\end{corollary}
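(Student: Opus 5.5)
Our plan is to verify conditions (a) and (b) of \Cref{thm:identif} directly, or equivalently to show that $\mathcal{L}$ is injective by reading off each $H_e$ from $\mathcal{L}(H)$. Throughout we write $N=md$ and view an $N\times N$ symmetric matrix as an $m\times m$ array of $d\times d$ blocks indexed by node pairs $(k,l)$.

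The first step is to compute the block structure of a single summand. Since $b_e=\mathbf e_i-\mathbf e_j$, the mixed-product rule gives
\[
B_eH_eB_e^\top=(b_e\otimes I_d)H_e(b_e^\top\otimes I_d)=(b_eb_e^\top)\otimes H_e .
\]
Hence the summand has $+H_e$ in the diagonal blocks $(i,i)$ and $(j,j)$, $-H_e$ in the off-diagonal blocks $(i,j)$ and $(j,i)$, and zeros in every other block. Condition (a) also follows at once, because $B_e$ contains the column block $\mathbf e_i\otimes I_d$ and so has rank $d=r_e$.

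The second step extracts $H_e$ from $\mathcal{L}(H)$. Suppose $\mathcal{L}(H)=\sum_f (b_fb_f^\top)\otimes H_f=0$, and fix an edge $e=\{i,j\}$. Look at the off-diagonal block $(i,j)$ with $i\neq j$. A summand for edge $f$ contributes to this block only if both $i$ and $j$ are endpoints of $f$, that is, only if $f=\{i,j\}$. Simplicity of $G$ is essential here: it guarantees there is exactly one such edge, namely $e$, and it excludes self-loops, for which $b_e=0$. The $(i,j)$ block of $\mathcal{L}(H)$ is therefore exactly $-H_e$, so $H_e=0$. Since $e$ was arbitrary, $H=0$ and $\mathcal{L}$ is injective. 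This is a direct proof of (b) as well, because the subspaces $\mathcal V_e$ are supported on pairwise disjoint off-diagonal block positions.

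The final step is to invoke \Cref{lem:reduction}, which converts injectivity of $\mathcal{L}$ into global identifiability. Nothing in the argument depends on connectivity, on $d$, or on the orientation of the edges, since reversing an edge only flips the sign of $b_e$ and leaves $b_eb_e^\top$ unchanged.

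I do not expect a genuine obstacle. The only delicate point is the block bookkeeping in the second step, specifically that off-diagonal blocks isolate individual edges. That is exactly where simplicity is used: parallel edges $e,e'$ would give $H_e=-H_{e'}$ as a nonzero kernel element of $\mathcal{L}$, which shows the hypothesis cannot be dropped.
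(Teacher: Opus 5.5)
Your proposal is correct and matches the paper's proof essentially step for step: the mixed-product identity $B_eH_eB_e^\top=(b_eb_e^\top)\otimes H_e$, the use of simplicity to make the off-diagonal $(i,j)$ block of $\mathcal{L}(H)$ equal to $-H_e$ alone, and then \Cref{lem:reduction} to pass from injectivity to global identifiability. Your extra check of condition (a) is not needed, and it is phrased loosely: it is the $i$-th \emph{row} block of $B_e$ that equals $I_d$ (equivalently $B_e^\top B_e=2I_d$), and this is what gives full column rank.
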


\begin{proof}
By the Kronecker mixed-product rule \cite[Ch.~4]{HornJohnson1991},
\[
  (b_e\otimes I_d)\,H_e\,(b_e\otimes I_d)^{\top}
  =(b_eb_e^{\top})\otimes H_e .
\]
View $\mathcal{L}(H)=\sum_e (b_eb_e^{\top})\otimes H_e$ as an $m\times m$ array of
$d\times d$ blocks; its $(k,l)$ block equals $\sum_e (b_eb_e^{\top})_{kl}\,H_e$.
For $e=\{i,j\}$ one has $(b_eb_e^{\top})_{ij}=-1$, while
$(b_fb_f^{\top})_{ij}=0$ for every other edge $f$: in a simple graph the pair
$\{i,j\}$ carries at most one edge, and each $b_fb_f^{\top}$ has off-diagonal
support only on the endpoints of $f$. Hence $\mathcal{L}(H)=0$ forces the
$(i,j)$ block $-H_e=0$, i.e.\ $H_e=0$ for every edge $e$. Injectivity and
\Cref{lem:reduction} give the claim.
\end{proof}

This is the matrix-valued analogue of the classical fact that the off-diagonal
entries of a weighted graph Laplacian are the negated edge weights, so the
weights are read off uniquely \cite[Ch.~13]{GodsilRoyle2001}; cf.\ the Gaussian
graphical parameterization in \cite{lauritzen1996}. Consequently, in the incidence model the edge-block parameterization is
globally identifiable for any simple graph and any $d$. This structural
identifiability is independent of the finite-sample recovery accuracy observed
in the known-truth study.

\begin{corollary}
\label{cor:scalar}
In the fixed-map model with $r_e=1$, write $B_e=\widetilde{B}_e=v_e\in\mathbb{R}^{p}$,
where $v_e$ places $a_{e,i}$ in the block of node $i$ and $-a_{e,j}$ in the block
of node $j$, and $W_e=w_e>0$, so that
$\mathcal{L}\bigl((w_e)\bigr)=\sum_{e}w_e\,v_ev_e^{\top}$. Then $W$ is identifiable
if and only if the rank-one matrices $\{v_ev_e^{\top}\}_{e\in E}$ are linearly
independent in $\mathcal{S}^{p}$. Equivalently:
\begin{enumerate}
  \item[\textup{(i)}] \emph{(necessary)} $|E|\le\binom{p+1}{2}$;
  \item[\textup{(ii)}] \emph{(checkable)} let $M\in\mathbb{R}^{\binom{p+1}{2}\times|E|}$
        have columns $\operatorname{svec}(v_ev_e^{\top})$, where
        $\operatorname{svec}:\mathcal{S}^{p}\to\mathbb{R}^{\binom{p+1}{2}}$ is the
        symmetric vectorization; then $W$ is identifiable iff
        $\operatorname{rank}(M)=|E|$;
        \item[\textup{(iii)}] \emph{(generic sufficiency)}
whenever $|E|\le\binom{p+1}{2}$, linear independence holds for
Lebesgue-almost-every $(v_e)_{e\in E}$; equivalently, full rank is generic
when the edge-map vectors are in general position.
\end{enumerate}
\end{corollary}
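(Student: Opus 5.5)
The plan is to reduce everything to \Cref{thm:identif} and \Cref{lem:reduction}, using the special form $r_e=1$. Here $\mathcal S^{1}=\R$, so $\mathcal L((h_e))=\sum_e h_e v_ev_e^\top$ and $\mathcal V_e=\operatorname{span}\{v_ev_e^\top\}$.

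\textbf{Main equivalence.} By \Cref{lem:reduction}, identifiability is equivalent to injectivity of $(h_e)\mapsto\sum_e h_e v_ev_e^\top$ on $\R^{|E|}$. Injectivity of a linear map from coefficients is exactly linear independence of $\{v_ev_e^\top\}$. This gives the main equivalence directly, without invoking conditions (a)--(b).

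As a consistency check, it also matches \Cref{thm:identif}. Condition (a), that $v_e\neq0$, is implied by independence. Condition (b) is the direct-sum statement for one-dimensional spans.

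\textbf{Items (i) and (ii).} Item (i) is the dimension count \eqref{eq:count} with $r_e=1$. Alternatively, $|E|$ independent vectors cannot live in the $\binom{p+1}{2}$-dimensional space $\mathcal S^p$.

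For (ii), $\operatorname{svec}$ is a linear isomorphism $\mathcal S^p\to\R^{\binom{p+1}{2}}$. Hence independence of the matrices is equivalent to independence of the columns of $M$, i.e.\ $\operatorname{rank}(M)=|E|$.

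\textbf{Item (iii).} This is the only step needing work. I would use the standard polynomial-zero-set argument. Write $k=|E|\le\binom{p+1}{2}$. Full column rank of $M$ fails iff every $k\times k$ minor of $M$ vanishes. Equivalently, $g(v):=\det(M^\top M)$ vanishes, and $g$ is a polynomial in the entries of $(v_e)$.

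The zero set of a polynomial that is not identically zero has Lebesgue measure zero, a standard fact proved by induction on the number of variables with Fubini. It therefore suffices to exhibit one configuration with $\operatorname{rank}M=k$.

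The rank-one matrices $uu^\top$ span $\mathcal S^p$, since $\mathbf e_i\mathbf e_i^\top$ together with $(\mathbf e_i+\mathbf e_j)(\mathbf e_i+\mathbf e_j)^\top$ give a basis. So we can pick $k$ vectors $u_1,\dots,u_k$ from this list whose outer products are independent, and set $v_e=u_e$.

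\textbf{Main obstacle.} The subtlety is the meaning of ``almost every $(v_e)$'' in the fixed-map model. There $v_e$ is constrained to be supported on the two blocks of nodes $i$ and $j$, with entries $a_{e,i}$ and $-a_{e,j}$. So the parameter space is $\prod_e\R^{d_i+d_j}$, not $(\R^p)^{|E|}$.

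I would first state the argument for unconstrained $v_e\in\R^p$, which is how the statement literally reads ("general position"). I would then remark that the same polynomial argument applies on the structured space provided a single full-rank witness with the support constraint exists. This is not automatic for arbitrary graphs. For example, repeated edges on one node pair with $d_i+d_j$ small could violate the count within $\mathcal S^{d_i+d_j}$.

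I would therefore phrase (iii) for the unconstrained $v_e$, or equivalently for maps in general position. In the structured case I would note that the finite rank check in (ii) certifies identifiability, which is what is used for the TCGA maps.
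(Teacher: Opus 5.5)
Your argument is correct and is essentially the paper's: identifiability reduces to linear independence of $\{v_ev_e^\top\}$ (you go straight through \Cref{lem:reduction} rather than conditions (a)--(b) of \Cref{thm:identif}, which is equivalent), (i)--(ii) follow from the dimension count and the $\operatorname{svec}$ isomorphism, and (iii) uses the same zero-set-plus-witness argument with the same basis $\mathbf e_i\mathbf e_i^\top$, $(\mathbf e_i+\mathbf e_j)(\mathbf e_i+\mathbf e_j)^\top$; your single polynomial $\det(M^\top M)$ is the sum of squared maximal minors by Cauchy--Binet, so it defines the same set. Your caveat is fair, since the paper's witness also ignores the two-block support of $v_e$, but for the simple graphs the paper actually uses it resolves trivially: the off-diagonal $(i,j)$ block of $v_ev_e^\top$ is $-a_{e,i}a_{e,j}^\top$ and no other edge touches that block, so independence holds whenever all $a_{e,i},a_{e,j}\neq0$, exactly as in \Cref{cor:incidence}, and your obstruction can arise only for multigraphs.
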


\begin{proof}
Apply \Cref{thm:identif} with $r_e=1$. Condition (a) is $v_e\neq0$, i.e.\
full column rank of a nonzero vector. Condition (b) is that the lines
$\mathbb{R}\,v_ev_e^{\top}$ form a direct sum, i.e.\ that
$\{v_ev_e^{\top}\}$ are linearly independent. Since
$\operatorname{svec}$ is a linear isomorphism $\mathcal{S}^{p}\cong\mathbb{R}^{\binom{p+1}{2}}$,
independence is exactly $\operatorname{rank}(M)=|E|$, giving (ii); the necessary
count \eqref{eq:count} reads $|E|\le\binom{p+1}{2}$, giving (i).

For (iii), suppose $|E|\le\binom{p+1}{2}$. Independence fails only when all
$|E|\times|E|$ minors of $M(v)$ vanish; these minors are polynomials in the
entries of $(v_e)$, so the dependence set is an algebraic subvariety of the
$(v_e)$-space. It is a \emph{proper} subvariety provided independence holds for
at least one choice of $(v_e)$. This is the case: the outer products
$\{vv^{\top}:v\in\mathbb{R}^{p}\}$ span $\mathcal{S}^{p}$ (e.g.\
$\mathbf{e}_i\mathbf{e}_i^{\top}$ and
$(\mathbf{e}_i+\mathbf{e}_j)(\mathbf{e}_i+\mathbf{e}_j)^{\top}$ together yield all
$E_{ii}$ and $E_{ij}+E_{ji}$), so one may select $|E|\le\binom{p+1}{2}$ vectors
with linearly independent outer products. A proper algebraic subvariety has Lebesgue measure zero, establishing genericity~\cite{DSS09}.
\end{proof}
\begin{remark}\label{rem:tcga}
In the TCGA-BRCA analysis, $p=15$ and
$|E|=10\leq \binom{16}{2}=120$. For the specific frozen
regularized-CCA maps used in the analysis, the matrix $M$ in
\Cref{cor:scalar} has numerical rank $10$. Hence the ten scalar
edge weights $(w_e)$ are identifiable for these fixed maps.
The smallest singular value of $M$ is $1.335$, with condition number
approximately $3.10$, so the full-rank conclusion is numerically well
separated from the rank-deficient case. \Cref{cor:scalar}(iii)
further shows that full rank is generic when the edge-map vectors are in
general position.
\end{remark}
\subsection{Why $\Prob(D_e>0)$ is not an equality test}

Suppose $\Pi_a$ and $\Pi_b$ have densities on $\SPD^d$.  For independent
$W_a\sim\Pi_a$ and $W_b\sim\Pi_b$, the event $W_a=W_b$ has probability
zero.  Since AIRM is a metric,
\[
d_{\AIRM}(W_a,W_b)=0
\quad\Longleftrightarrow\quad
W_a=W_b,
\]
and hence
\[
\Prob\{d_{\AIRM}(W_a,W_b)>0\}=1,
\]
even when the underlying data-generating parameters are identical.  A
positive draw-pair distance therefore mixes structural separation with
finite-sample posterior spread, motivating the same-truth reference used
below.

\section{Posterior Inference Algorithm}
\label{app:inference}

For the intrinsic potential \eqref{eq:intrinsic-potential}, define
\begin{equation}
G_e(W)
=
\frac n2B_e^\top(S-Q(W)^{-1})B_e
-\frac{\nu}{2}W_e^{-1}
+\frac12\Psi^{-1}.
\label{eq:potential-gradient}
\end{equation}
Under the edgewise AIRM,
\[
\operatorname{grad}_e\Phi(W)=W_eG_e(W)W_e.
\]
If
$\mathcal B=[B_{e_1}\;\cdots\;B_{e_{|E|}}]$, a single solve
$QZ=\mathcal B$ after Cholesky factorization of $Q$ provides all
$B_e^\top Q^{-1}B_e$.

\begin{algorithm}[htbp]
\caption{Posterior inference for one graph context}
\label{alg:posterior-context}
\begin{algorithmic}[1]
\Require Centered observations $\{Y_r\}_{r=1}^n$; graph $G$; stabilizer $R$;
prior $(\nu,\Psi)$; sampler settings.
\Ensure Posterior draws and multi-chain diagnostics.
\State Form
\[
S=n^{-1}\sum_rY_rY_r^\top,
\qquad
Q(W)=R+\sum_eB_eW_eB_e^\top.
\]
\For{each chain}
    \State Initialize the SPD edge blocks.
    \For{each MCMC iteration}
        \State Evaluate the intrinsic posterior and product-AIRM gradient
        using the shared precision solve.
        \State Generate a product-AIRM Langevin proposal and map each block
        with the affine-invariant exponential map.
        \State Apply the exact Metropolis-Hastings correction.
    \EndFor
    \State Discard burn-in and retain the prescribed production draws.
\EndFor
\State Compute split-$\widehat R$, ESS, acceptance, and numerical-failure
summaries.
\end{algorithmic}
\end{algorithm}

The affine-invariant exponential map is
\[
\operatorname{Exp}_W(V)
=
W^{1/2}
\exp(W^{-1/2}VW^{-1/2})
W^{1/2},
\]
which remains SPD for every symmetric tangent vector $V$.  Full
proposal-density and reversibility derivations are given in the dedicated
sampler paper \cite{sampler}.

Independent Wishart priors $W_e\sim\mathcal W_d(\nu,\Psi)$ are used
throughout.  For $d=3$, $\nu=5$ and $\Psi=I_3/5$; for Meteostat
($d=4$), $\nu=6$ and $\Psi=I_4/6$, giving prior mean $I_d$.

\section{Additional Results for Known-Truth Structural Resolution}
\label{app:expA}

We use five independent generating seeds,
$n\in\{200,500,1000\}$, and target distances
$D^\star\in\{0.5,0.75,1.0,1.5\}$.  For each seed, a fresh baseline truth
is generated and held fixed while the controlled perturbations are applied.
An independent same-truth dataset provides the finite-sample no-change
reference.  Smaller sample sizes use nested prefixes of the corresponding
$n=1000$ dataset.

\begin{algorithm}[htbp]
\caption{Controlled structural-change experiment with a no-change reference}
\label{alg:controlled-change}
\begin{algorithmic}[1]

\Require Independent generating seeds $j=1,\ldots,J$; sample-size grid
$\mathcal N$; target-distance grid $\mathcal G_D$; perturbation families;
evidence cutoff $\gamma=0.95$.

\Ensure Paired AIRM-distance summaries $D_{e,\mathrm{ref}}$ and $D_{e,c}$,
posterior comparison probabilities $R_e(c)$, detection indicators, and
matched generalized deformation modes.

\For{each generating replicate $j$}
    \State Generate a fresh baseline truth $W_j^{\star,0}$.
    \State Set $W_j^{\star,0'}=W_j^{\star,0}$ to define a same-truth
    reference condition.

    \For{each $D^\star\in\mathcal G_D$}
        \State Construct a sparse perturbation of one edge with target AIRM
        distance $D^\star$.
        \State Construct an isospectral orientation perturbation
        $W_e^{\star,c}=RW_e^{\star,0}R^\top$, $R^\top R=I$, with the same
        target distance.
    \EndFor

    \State Generate independent $n_{\max}=\max\mathcal N$ observations from
    the baseline, same-truth reference, and each perturbed condition; use
    nested prefixes for smaller $n$.

    \For{each $n\in\mathcal N$}
        \State Fit all required posteriors using
        \Cref{alg:posterior-context} and apply the diagnostic gate.

        \For{each edge $e$}
            \State Pair baseline and same-truth posterior draws to obtain
            $D_{e,\mathrm{ref}}$.

            \For{each perturbed condition $c$}
                \State Pair baseline and condition draws to obtain $D_{e,c}$.
                \State Estimate
                \[
                R_e(c)
                =
                \Prob(D_{e,c}>D_{e,\mathrm{ref}}\mid\mathcal D)
                \]
                by Monte Carlo pairing.
                \State Declare the change detectable when $R_e(c)\ge\gamma$.
                \State Record deformation magnitude and generalized modes,
                and match identifiable directions to the planted directions.
            \EndFor
        \EndFor
    \EndFor
\EndFor

\State Aggregate sensitivity, specificity, no-change separation, magnitude,
sign, and direction recovery.
\end{algorithmic}
\end{algorithm}

\subsection{Detection boundary}

The rule $R_e\ge0.95$ declares a planted change detectable only when its
draw-pair distance exceeds the finite-sample separation observed under the
same underlying graph.  Each condition changes one of six edges, so five
generating seeds provide five changed-edge and $25$ unchanged-edge
comparisons.  No unchanged edge is declared changed in any reported
condition, giving specificity $1.000$ throughout.

\begin{table}[H]
\centering
\small
\setlength{\tabcolsep}{5pt}
\begin{tabular}{llcccc}
\toprule
$n$ & Family &
$D^\star=0.50$ &
$D^\star=0.75$ &
$D^\star=1.00$ &
$D^\star=1.50$ \\
\midrule
200  & Sparse      & $0/5$ & $0/5$ & $0/5$ & $1/5$ \\
200  & Orientation & $0/5$ & $0/5$ & $0/5$ & $4/5$ \\
500  & Sparse      & $0/5$ & $1/5$ & $1/5$ & $5/5$ \\
500  & Orientation & $0/5$ & $2/5$ & $4/5$ & $5/5$ \\
1000 & Sparse      & $0/5$ & $1/5$ & $3/5$ & $5/5$ \\
1000 & Orientation & $0/5$ & $4/5$ & $5/5$ & $5/5$ \\
\bottomrule
\end{tabular}
\caption{
Fraction of five independent replicates in which the planted edge satisfies
$R_e\ge0.95$.
}
\label{tab:expA-boundary}
\end{table}

\subsection{No-change separation and deformation recovery}

Even without structural change, independent finite datasets produce positive
posterior separation.  Averaged over the six edges and five generating
replicates, the mean AIRM distance decreases from $1.052$ at $n=200$ to
$0.449$ at $n=1000$.

\begin{table}[htbp]
\centering
\small
\begin{tabular}{ccc}
\toprule
$n$ & Mean AIRM distance & Mean 95th percentile \\
\midrule
200  & $1.052$ & $1.726$ \\
500  & $0.636$ & $0.991$ \\
1000 & $0.449$ & $0.693$ \\
\bottomrule
\end{tabular}
\caption{
Finite-sample separation between posteriors fitted to independent datasets
generated from the same graph, averaged over six edges and five replicates.
}
\label{tab:expA-null}
\end{table}

At $D^\star=1$, magnitude recovery moves toward the planted value as $n$
increases, while sign recovery is already close to one and direction
alignment improves systematically.

\begin{table}[htbp]
\centering
\footnotesize

\begin{subtable}[t]{0.38\linewidth}
\centering
\begin{tabular}{ccc}
\toprule
$n$ & Sparse & Orientation \\
\midrule
200  & $1.561$ & $1.446$ \\
500  & $1.320$ & $1.363$ \\
1000 & $1.065$ & $1.138$ \\
\bottomrule
\end{tabular}
\caption{Recovered AIRM magnitude.}
\end{subtable}
\hfill
\begin{subtable}[t]{0.58\linewidth}
\centering
\begin{tabular}{lccc}
\toprule
Metric & $n=200$ & $n=500$ & $n=1000$ \\
\midrule
Sparse sign              & $0.996$ & $1.000$ & $1.000$ \\
Sparse alignment         & $0.853$ & $0.928$ & $0.956$ \\
Orientation sign         & $0.991$ & $0.999$ & $1.000$ \\
Orientation alignment    & $0.928$ & $0.963$ & $0.978$ \\
\bottomrule
\end{tabular}
\caption{Directional recovery.}
\end{subtable}

\caption{
Recovery at $D^\star=1$.  Sign entries give the probability of recovering
the planted strengthening or weakening direction; alignment is the
sign-invariant baseline-metric alignment with the planted generalized
direction.
}
\label{tab:expA-recovery}
\end{table}

All $165$ posterior fits pass the prespecified diagnostic gate: production
acceptance lies in $[0.644,0.746]$, the largest split-$\widehat R$ is
$1.0012$, the smallest ESS is $6448$, and no numerical failures occur.


\subsection{Native-comparator details and full-grid behavior}
\label{app:native_comparison_details}
This appendix expands the native-method comparison in
\Cref{sec:expA_native_comparison}. For each condition, BMVG estimator uses posterior samples from the corresponding matrix-valued graph fit, obtained with ConeMALA. DWW14 FGL is fit to the two
\(15\)-dimensional contexts using the native \texttt{JGL} fused graphical-lasso
solver; its sparsity and fusion penalties are selected by validation Gaussian
NLL on a grid that includes the unpenalized boundary. PSV15 multiGGM is fit
with the G-Wishart sampler using four chains, \(5000\) burn-in iterations and
\(2000\) saved draws per chain in the reported comparison. The fit uses only the
zero-mean training scatter matrices \(X_c^\top X_c\) and sample sizes; planted
truth and test observations are withheld from fitting. Flury84 CPC is fit by
profiled Gaussian maximum likelihood to the three-dimensional changed-edge
contrast and is evaluated on contrast covariance, covariance change, and
leading-axis rotation. These choices intentionally retain each method's native
statistical target rather than forcing all competitors into BMVG 
matrix-edge parameterization.

\Cref{fig:native_deltaQ} shows relative \(\Delta Q\) error at \(n=1000\) across
the full deformation grid. Under sparse perturbations, BMVG 
improves rapidly with signal strength and is more accurate than both FGL and
PSV15 for moderate and large \(D^\star\). Under orientation perturbations,
FGL is competitive and can be more accurate for the global
difference matrix, especially at smaller deformation magnitudes. This does
not contradict the directional result in the main text: a fused estimate can
approximate the aggregate matrix difference without recovering the
matrix-valued edge deformation or its rotating directions.

\begin{figure*}[htbp]
\centering
\begin{minipage}[t]{0.49\linewidth}
\centering
\includegraphics[width=\linewidth]{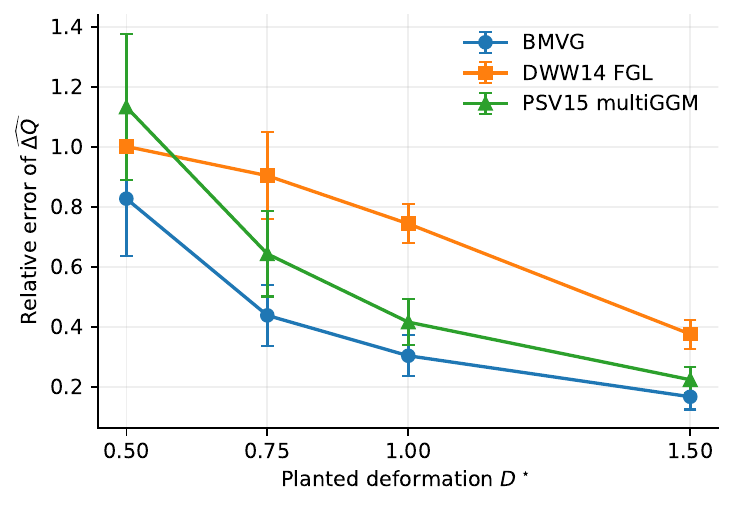}\\[-2pt]
\textbf{(a)} Sparse perturbation
\end{minipage}\hfill
\begin{minipage}[t]{0.49\linewidth}
\centering
\includegraphics[width=\linewidth]{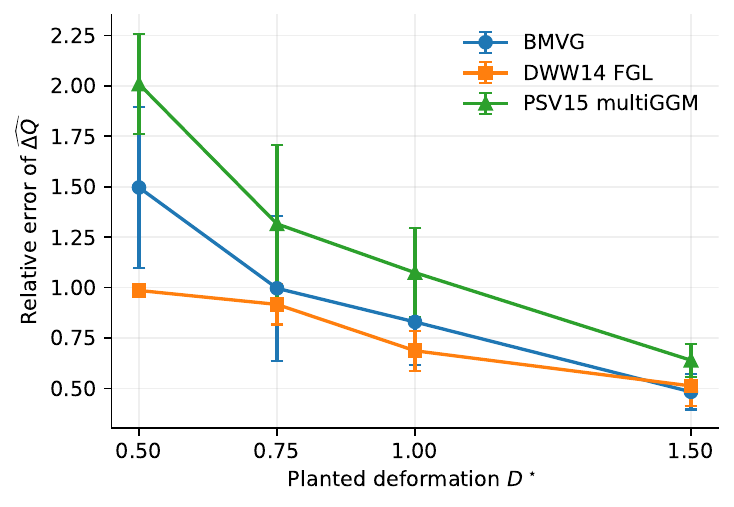}\\[-2pt]
\textbf{(b)} Orientation perturbation
\end{minipage}
\caption{Global differential-precision recovery at $n=1000$.
The vertical axis is
$\|\widehat{\Delta Q}-\Delta Q^\star\|_F/\|\Delta Q^\star\|_F$.
(a) Under sparse perturbations, BMVG has the smallest error once the
deformation is moderate. (b) Under orientation perturbations, FGL is
competitive for the aggregate precision change and is more accurate over
part of the deformation range. These global $\Delta Q$ results complement
the edge-level directional comparison in the main text. Error bars show
one sample standard deviation across five independent replicates.}
\label{fig:native_deltaQ}
\end{figure*}

For localization we score each module pair by the Frobenius norm of its
posterior-mean precision change. BMVG ranks only the six edges
in its fixed scaffold, whereas the native FGL and PSV15 comparisons search all
ten module pairs. To separate this candidate-set advantage from the fitted
precision estimates, we also report an auxiliary FGL score obtained by
restricting the already-fitted FGL result to the same six scaffold edges.
Across the complete \(120\)-condition grid, BMVG has mean rank
\(1.258/6\), AUROC \(0.948\), and AUPRC \(0.914\). PSV15 is also a strong
localizer when its posterior \(\Delta Q\) is aggregated into \(3\times3\)
module blocks, with mean rank \(1.433/10\), AUROC \(0.952\), and AUPRC
\(0.882\). FGL is weaker at small and moderate sample sizes but converges to
near-perfect localization for sufficiently strong signals.

\begin{table*}[htbp]
\centering
\scriptsize
\setlength{\tabcolsep}{4.5pt}
\begin{tabular}{llccc}
\toprule
Method & Localization score / candidate set & Mean rank & AUROC & AUPRC\\
\midrule
BMVG & 6-edge scaffold & 1.258 & 0.948 & 0.914 \\
DWW14 FGL & all 10 pairs & 3.625 & 0.708 & 0.480 \\
DWW14 FGL & post-fit 6-edge restriction & 2.454 & 0.709 & 0.525 \\
PSV15 multiGGM & all 10 pairs; \(\|\Delta Q_{ij}\|_F\) &
1.433 & 0.952 & 0.882 \\
PSV15 multiGGM & all 10 pairs; mean differential PIP &
7.325 & 0.297 & 0.143 \\
\bottomrule
\end{tabular}
\caption{Values average over all seeds, sample sizes, effect sizes, and both
perturbation families. Candidate sets differ by design: BMVG uses the fixed
six-edge scaffold, whereas native DWW14 and PSV15 search all ten module
pairs. The post-fit FGL restriction is an auxiliary same-scaffold
diagnostic.}
\label{tab:native_localization}
\end{table*}

\begin{figure*}[htbp]
\centering
\begin{minipage}[t]{0.49\linewidth}
\centering
\includegraphics[width=\linewidth]{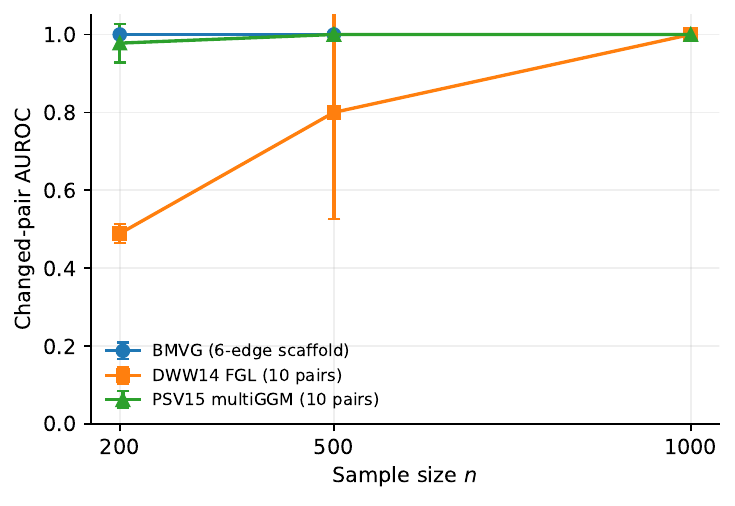}\\[-2pt]
\textbf{(a)} Sparse perturbation
\end{minipage}\hfill
\begin{minipage}[t]{0.49\linewidth}
\centering
\includegraphics[width=\linewidth]{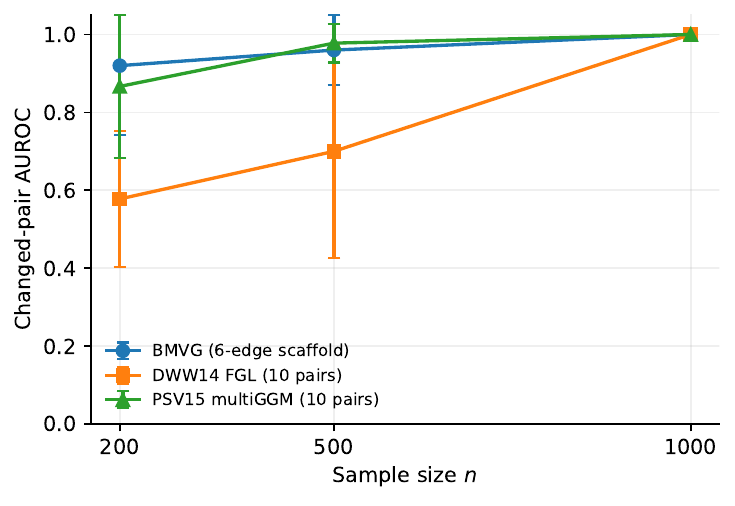}\\[-2pt]
\textbf{(b)} Orientation perturbation
\end{minipage}
\caption{BMVG searches its six-edge scaffold, whereas FGL and PSV15 search all
ten module pairs. (a) For sparse perturbations, BMVG and PSV15 are already
near-perfect at $n=200$, while FGL improves with sample size.
(b) Orientation changes are more difficult at small $n$, particularly for
FGL, but all three methods approach perfect localization by $n=1000$.
Error bars show one sample standard deviation across five independent
replicates.}
\label{fig:native_localization}
\end{figure*}

PSV15 provides a posterior probability of support change for each scalar
precision edge. In contrast, the known-truth structural-resolution study
plants a reconfiguration of an existing \(3\times3\) module interaction:
its magnitude and/or orientation changes while much of the scalar
conditional-independence support can remain present. Consequently,
posterior-mean block change
\(\|\widehat{\Delta Q}_{ij}\|_F\) localizes the planted module pair very well,
whereas the native differential-edge PIP is weak (\Cref{tab:native_localization}). The two
quantities address different inferential targets: PSV15 asks whether scalar
conditional-independence support changes, whereas BMVG characterizes how an
existing multivariate edge interaction changes in magnitude and direction.

The reported PSV15 uses four chains. Across all \(120\) conditions,
the mean maximum pairwise disagreement between chain-specific posterior-mean
precision estimates is approximately \(1.94\%\) for \(Q_0\) and \(1.90\%\)
for \(Q_1\); the worst observed disagreements are about \(4.09\%\) and
\(4.70\%\), respectively. These values are reported only as a chain-mean
agreement diagnostic, not as a substitute for \(\widehat R\) or ESS. No
planted truth or held-out test data are used in the PSV15 fit.

The three competitors expose complementary aspects of the problem. FGL can
be effective for estimating a global differential precision, PSV15 is a
strong Bayesian sparse-GGM estimator and changed-block localizer, and CPC
provides a natural common-eigenspace baseline. BMVG is distinguished by
combining context-specific Bayesian precision inference with an explicit
edge-level deformation geometry. A changed interaction can therefore be
localized and decomposed into deformation magnitude and direction, rather
than being represented only through scalar support changes or entrywise
precision differences.
\section{Additional Meteostat Results}
\label{app:meteostat}

The labels $R0,\ldots,R3$ denote four prespecified, data-driven weather
regimes fitted using training data only.  For each 3-hour anomaly field,
features summarize spatial means, standard deviations, ranges, 12-hour
tendencies, maximum wind anomaly, minimum pressure anomaly, and the norm of
the tendency vector.  They are averaged over a trailing 12-hour window using
only current and past observations, standardized using training-period
statistics, and clustered by $k$-means with $K=4$.  Labels are ordered by
increasing spatial-mean temperature anomaly at the cluster center and do not
represent predefined meteorological categories.

\subsection{Regime and seasonal separation}

Across regime pairs, mean edgewise AIRM separation ranges from $2.453$ to
$3.011$, with mean $2.800$; across seasons it ranges from $2.347$ to
$4.168$, with mean $3.511$.

\begin{table}[htbp]
\centering
\small

\begin{subtable}[t]{0.47\linewidth}
\centering
\begin{tabular}{lr}
\toprule
Context pair & Mean edge AIRM\\
\midrule
R1--R2 & 3.011\\
R2--R3 & 2.917\\
R0--R1 & 2.847\\
R1--R3 & 2.795\\
R0--R2 & 2.780\\
R0--R3 & 2.453\\
\bottomrule
\end{tabular}
\caption{Regime pairs.}
\end{subtable}
\hfill
\begin{subtable}[t]{0.47\linewidth}
\centering
\begin{tabular}{lr}
\toprule
Context pair & Mean edge AIRM\\
\midrule
Autumn--Spring & 4.168\\
Spring--Winter & 3.949\\
Summer--Winter & 3.923\\
Autumn--Summer & 3.609\\
Autumn--Winter & 3.067\\
Spring--Summer & 2.347\\
\bottomrule
\end{tabular}
\caption{Season pairs.}
\end{subtable}

\caption{
Mean AIRM distance between corresponding posterior-mean edge matrices,
averaged over the 16 fixed geographic edges.
}
\label{tab:meteostat-pairs}
\end{table}

\subsection{Global variable structure}

The leading ordinary mode remains pressure-dominated under raw, anomaly, and
innovation representations.  Relative to anomaly geometry, innovation has
larger conditional variance and effective resistance for temperature,
relative humidity, and wind speed, whereas pressure changes much less.

\begin{table}[htbp]
\centering
\small

\begin{subtable}[t]{0.55\linewidth}
\centering
\begin{tabular}{lrrrrr}
\toprule
Geometry & Lead share & Temp. & RH & Wind & Press.\\
\midrule
Anomaly    & 0.848 & 0.066 & 0.262 & 0.380 & 0.007\\
Innovation & 0.892 & 0.125 & 0.436 & 0.687 & 0.010\\
Raw        & 0.846 & 0.072 & 0.276 & 0.414 & 0.008\\
\bottomrule
\end{tabular}
\caption{Leading-mode share and conditional variance.}
\end{subtable}
\hfill
\begin{subtable}[t]{0.40\linewidth}
\centering
\begin{tabular}{lrr}
\toprule
Variable & Anomaly & Innovation\\
\midrule
Pressure          & 0.018 & 0.024\\
Temperature       & 0.154 & 0.285\\
Relative humidity & 0.549 & 0.897\\
Wind speed        & 0.786 & 1.400\\
\bottomrule
\end{tabular}
\caption{Effective resistance.}
\end{subtable}

\caption{
Complementary summaries of global variable structure. Conditional variances
are averaged across stations; effective resistances are averaged over the
fixed network.
}
\label{tab:meteostat-global}
\end{table}

Posterior partial correlations give the same qualitative pattern:
temperature coupling weakens under innovation while pressure associations
remain comparatively strong.

\subsection{Edgewise reconfiguration}

The anomaly-innovation change is spatially heterogeneous.  The largest
posterior-mean AIRM distance occurs on Half Moon Bay-San Carlos, with
several other large changes involving coastal or peninsula connections.

\begin{table}[htbp]
\centering
\small
\begin{tabular}{lrrrr}
\toprule
Edge &
$D_{\rm mean}$ &
Draw mean &
$q_{.05}$ &
$q_{.95}$\\
\midrule
Half Moon Bay--San Carlos       & 5.351 & 6.850 & 5.454 & 8.740\\
San Francisco--Half Moon Bay    & 4.609 & 5.819 & 4.461 & 7.497\\
Oakland--Novato                 & 4.464 & 5.943 & 4.437 & 8.009\\
Palo Alto--San Carlos           & 4.335 & 5.089 & 3.802 & 6.901\\
Oakland--Concord                & 4.192 & 5.647 & 4.237 & 7.175\\
Oakland--Half Moon Bay          & 3.890 & 4.744 & 3.510 & 6.314\\
\bottomrule
\end{tabular}
\caption{
Largest anomaly--innovation edge changes.
$D_{\rm mean}$ is the AIRM distance between posterior-mean edge matrices; the remaining columns summarize paired posterior draws.
}
\label{tab:meteostat-edge-change}
\end{table}

The generalized-direction analysis in \Cref{fig:meteostat_anisotropic_reconfiguration} provides the
corresponding directional interpretation: the strongest sign-certain changes
are negative and are dominated primarily by wind speed and relative humidity.

\subsection{Atmospheric distribution-shift detection}
\label{app:ood}

Six prespecified daily summaries are compared on validation data.  Selection
uses validation TPR subject to
$\mathrm{FPR}_{\rm val}\le0.05$; final-test metrics are reported only after
the statistic is frozen.

\begin{table}[htbp]
\centering
\scriptsize
\setlength{\tabcolsep}{3.5pt}
\begin{tabular}{lccccc}
\toprule
Statistic
& Val.\ AUROC
& Val.\ AUPRC
& Val.\ TPR
& Test AUROC
& Test AUPRC \\
\midrule
Maximum        & .703 & .697 & .250 & .751 & .794 \\
Mean           & .841 & .827 & .438 & .823 & .840 \\
Median         & .891 & .861 & .563 & .787 & .810 \\
\textbf{Trimmed mean}
               & .888 & .883 & .625 & .835 & .848 \\
75th percentile& .800 & .794 & .438 & .799 & .834 \\
Top-two mean   & .753 & .773 & .438 & .777 & .813 \\
\bottomrule
\end{tabular}
\caption{
Candidate daily graph-structural summaries.  The bold row indicates the
statistic selected on validation.
}
\label{tab:ood_statistic_comparison}
\end{table}

The trimmed mean achieves the largest validation TPR under the prespecified
FPR constraint and the largest validation AUPRC, and gives final-test
AUROC $0.835$ and AUPRC $0.848$.  At the primary $5\%$ validation-ID FPR
budget, test TPR is $0.36$ and empirical test FPR is $0$.  Increasing the
budget to $0.15$ raises test TPR to $0.52$ with FPR $0.033$, while at
$0.20$ the additional gain is small relative to the rise in false alarms.
The aggregate performance also varies temporally, with stronger structural
departure in December than in November; we therefore interpret this analysis
as structural OOD rather than treating every feature-space novelty event as a
graph anomaly.

\section{Additional TCGA-BRCA Results}
\label{app:tcga}

\subsection{Data and frozen scalar edge maps}

The analysis uses 1097 unique primary TCGA-BRCA tumors and the 15 genes
listed in \Cref{sec:tcga}.  The source uses the PKM2 field for PKM, so the
loader applies PKM$\leftarrow$PKM2.  Gene-wise means and standard deviations
are computed once on the full primary cohort and reused for the population
comparisons.

For each of the ten module pairs, the leading regularized-CCA coordinate is
estimated from pooled expression before ER labels are loaded, using
regularization $0.05$.  The resulting projected correlations range from
$0.172$ to $0.567$, with mean $0.403$.

\begin{figure}[htbp]
\centering
\includegraphics[
    width=.55\linewidth,
    keepaspectratio
]{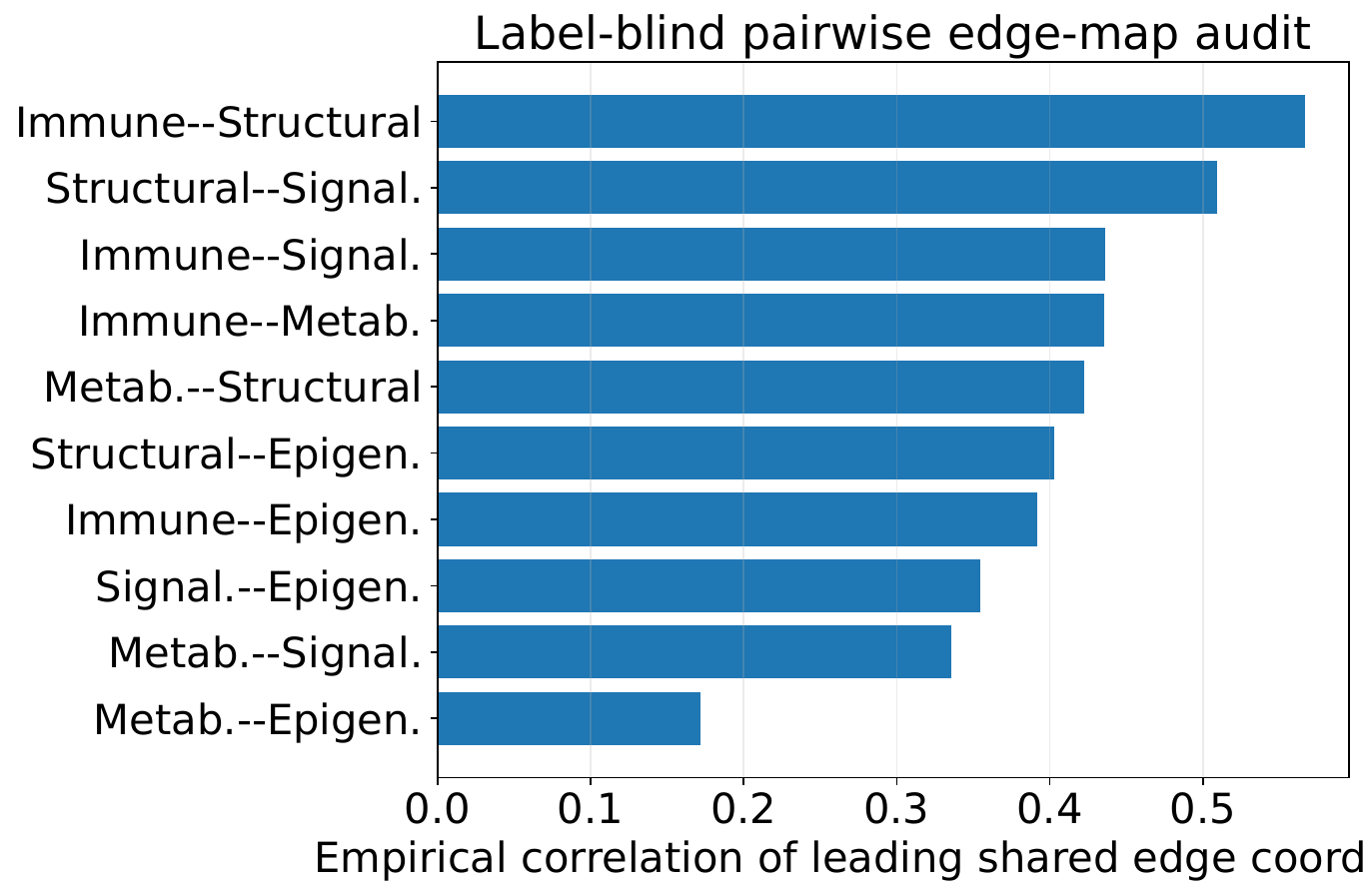}
\caption{
Empirical correlations between projected endpoint scores for the ten
module-pair coordinates.  The maps are learned without ER labels and frozen
before posterior comparison.
}
\label{fig:tcga-map-audit}
\end{figure}

\subsection{$K_5$ comparison and robustness}

The complete scaffold contains all ten module pairs.  Four chains are run for
each ER group; the largest split-$\widehat R$ is $1.000255$, the smallest
ESS is $11042.98$, and no numerical failures occur.

\begin{table}[htbp]
\centering
\scriptsize
\begin{tabular}{r l r r r r c}
\toprule
Rank & Module pair & $\bar w_e^{-}$ & $\bar w_e^{+}$ &
$D_e$ & 90\% interval for $\eta_e$ & Sign-certain\\
\midrule
1  & Structural/EMT--Signalling & 0.182 & 0.924 & 1.625 & $[1.173,2.314]$ & Yes\\
2  & Metabolic--Epigenetic      & 0.036 & 0.113 & 1.151 & $[0.014,3.142]$ & Yes\\
3  & Metabolic--Signalling      & 0.266 & 0.620 & 0.845 & $[0.541,1.211]$ & Yes\\
4  & Immune--Structural/EMT     & 0.544 & 0.983 & 0.591 & $[0.300,0.928]$ & Yes\\
5  & Signalling--Epigenetic     & 0.351 & 0.523 & 0.398 & $[0.069,0.790]$ & Yes\\
6  & Metabolic--Structural/EMT  & 0.508 & 0.667 & 0.272 & $[0.003,0.572]$ & Yes\\
7  & Structural/EMT--Epigenetic & 0.537 & 0.669 & 0.221 & $[-0.054,0.532]$ & No\\
8  & Immune--Epigenetic         & 0.325 & 0.392 & 0.189 & $[-0.222,0.715]$ & No\\
9  & Immune--Metabolic          & 0.580 & 0.618 & 0.063 & $[-0.210,0.364]$ & No\\
10 & Immune--Signalling         & 0.507 & 0.513 & 0.013 & $[-0.316,0.364]$ & No\\
\bottomrule
\end{tabular}
\caption{
ER$-$ versus ER$+$ comparison on $K_5$.  $D_e$ is the AIRM distance
between posterior-mean scalar weights; intervals are paired-draw
$5$-$95\%$ intervals for $\eta_e$.
}
\label{tab:tcga-k5-edgewise}
\end{table}

The leading result is stable to both robustness checks.  Across all 12
labeled five-cycle scaffolds, Structural/EMT-Signalling ranks first in all
six cycles containing it, while the $K_5$ and cycle-average edge distances
have Pearson $r=0.990$ and Spearman $\rho=0.988$.  Removing ER-specific
means reduces the mean edge AIRM only from $0.536816$ to $0.505015$, retaining
$94.08\%$ of the primary separation; all ten edge ranks are preserved
(Spearman $\rho=1.000$; Pearson $r=0.9934$).

\end{document}